\theoremstyle{plain}
\newtheorem{assumption}{Assumption}
\theoremstyle{plain}
\newtheorem{proposition}{Proposition}
\title{A New Testing Method  for Justification Bias Using High-Frequency Data of Health and Employment}
\author{Jiayi Wen \thanks{E-mail: wjyecon@gmail.com. School of Economics and Wang Yanan Institute of Studies in Economics (WISE), Xiamen University, Fujian, China. 
}, Zixi Ye  \thanks{School of Economics, Xiamen University, Fujian, China} , Xuan Zhang \thanks{E-mail: xuanzhang@smu.edu.sg. Singapore Management University, Singapore}  }
\date{February 2024}
\begin{document}
\maketitle

\begin{abstract}

Justification bias, wherein retirees may report poorer health to rationalize their retirement, poses a major concern to the widely-used measure of self-assessed health in retirement studies. This paper introduces a novel method for testing the presence of this bias in the spirit of regression discontinuity. The underlying idea is that any sudden shift in self-assessed health immediately following retirement is more likely attributable to the bias. Our strategy is facilitated by a unique high-frequency data that offers monthly, in contrast to the typical biennial, information on employment, self-assessed health, and objective health conditions. Across a wider post-retirement time frame, we observe a decline in self-assessed health, potentially stemming from both justification bias and changes in actual health. However, this adverse effect diminishes with shorter intervals, indicating no evidence of such bias. Our method also validates a widely-used indirect testing approach.

\textit{Keywords: Justification Bias, Self-Assessed Health, Retirement} 
   
JEL code: J14, J26, I10
    
\end{abstract}

\clearpage

\section{Introduction}
\ \ \ \
Understanding the relative importance of health and economic factors in retirement is pivotal for designing policies that incentivize older workers to extend their careers, especially in an ageing society. While detailed diagnostic data allow researchers to probe the effects of specific health symptoms, self-assessed health emerges as an essential metric for gauging the overall influence of health on retirement. It not only offers a comprehensive evaluation of diverse health aspects but also inherently assigns priority to them based on respondents' perceived importance \citep{bound1991}. In addition to its central role in empirical studies examining the influence of health on retirement, self-assessed health is extensively adopted in structural retirement models because of the curse of dimensionality and its holistic nature.\footnote{For regression-based studies that have adopted self-assessed health, see  \cite{disney2006ill}, \cite{bound1999dynamic} as examples. For structural work, see \cite{gustman2005social}, \cite{van2008social}, \cite{french2011effects},  among others. }.

Despite these merits, a critical concern about self-assessed health is reporting bias. Justification bias emerges as a systematic reporting bias when retirees employ health as a more socially acceptable justification for retirement rather than providing authentic reasons. By inflating the effects of health on retirement, it has long been accused of an identification threat  in the literature \citep{anderson1985retirement, dwyer1999, disney2006ill, bound2010health, blundell2023}.  Existing studies commonly test its presence by comparing instrumental variable (IV) estimates with the raw retirement effect of self-assessed health, assuming that objective health, used as the instrument, is exogenous \citep{stern1989measuring,bound1991,blundell2023}. While studies based on this indirect approach generally do not uncover strong evidence of justification bias, other research that explores health variations by retirement status find such bias is nonignorable and large \citep{lindeboom2009health},\citep{black2017justification}.

In this paper, we introduce a novel method for testing the presence of justification bias.  Our key idea hinges on that sudden changes in self-assessed health immediately after retirement are more likely to be attributed to reporting bias rather than changes in actual health. We leverage a distinctive dataset, Singapore Life Panel (SLP), which provides information at the monthly level on individuals' employment, self-assessed health, and objective health conditions. Unlike standard data sets for older populations, usually surveyed every two years, this high-frequency data set empowers us to meticulously scrutinize abrupt changes in health, both subjective and objective, around retirement.

We begin by isolating the causal effect of retirement on self-assessed health from the observed health and retirement correlation, exploiting the statutory retirement age as IV, following a number of recent literature exploring the health implications of retirement.\footnote{See \cite{eibich2015} and \cite{Rose2020} for examples.}. We find an adverse effect of retirement on self-assessed health. Nevertheless, while this outcome aligns with the potential existence of justification bias, it is also plausible that it is influenced by shifts in actual health status.

To further disentangle reporting bias from changes in actual health, we apply our new approach under two sets of identification assumptions. The first assumption posits that while retirement may influence health transitions through lifestyle changes, the impact on the \textit{level} of actual health should manifest over time without abrupt changes. Detection of sudden shifts in the level of self-assessed health within a sufficiently narrow timeframe is thus attributable to reporting bias. In a broader window around the statutory retirement age, we observe that retirement leads to a decline in self-assessed health. However, as the window narrows, the estimated effect diminishes, and with sufficiently narrow windows, we cannot reject the null hypothesis of zero retirement effect on self-assessed health. Consequently, we find no evidence of justification bias.

The first identification assumption focuses only on the variation in self-assessed health. With monthly information on various objective health conditions, we can further relax this assumption. Our second identification assumption only considers a sudden shift in self-assessed health relative to objective health as reporting bias. Intuitively, the variation in objective health can serve as the ``control group''. Therefore, our identification resembles the ``difference-in-differences'' framework, capturing the difference in subjective versus objective health before and after retirement. Similarly, we find that while self-assessed health experiences a larger decline compared to objective health in a broader post-retirement window, the gap diminishes as the time frame narrows, becoming statistically insignificant under a sufficiently small window surrounding the retirement age.

As we discover a null effect of retirement within a shorter period, it is imperative to ensure that this insignificance is not a result of limited statistical power under smaller sample sizes. In contrast to the diminishing coefficients, corresponding standard errors remain highly stable as the time frame narrows. The estimates even appear more precise due to stronger first stages.

Additional exercises test justification bias based on the widely-used indirect approach. The raw effect of self-assessed health on retirement is notably smaller than the IV estimate, indicating that attenuation bias from measurement errors dominates justification bias. Therefore, our new method validates this common indirect approach for testing justification bias. We also apply our new method to the Health and Retirement Study (HRS) data set from the US, a typical biennial survey. We find that there is no enough power to make credible conclusions. While this comparison highlights the advantage of high-frequency data, our method has the potential applicability to standard data sets as their waves and sample size accumulate.  Additionally, we examine the immediate effects of retirement on various objective health conditions, finding no significant results, which further validates our first identification assumption that actual health should not change suddenly.


This paper introduces an innovative method to address a longstanding challenge in the study of the relationship between health and retirement.  Early investigations into retirement often reveal more significant effects of health compared to economic variables when using subjective measures, with justification bias frequently cited as a prominent explanation \citep{anderson1985retirement,bazzoli1985early}. Assuming the exogeneity of objective health, \cite{stern1989measuring} proposes its use as an instrument for subjective health. The IV estimate appears to be close to, yet slightly larger than, the OLS estimate, ruling out justification bias. Similar results are found by \cite{bound1991} and \cite{dwyer1999}. Using comprehensive objective measures, \cite{blundell2023} similarly conclude that ``measurement error and justification bias are not important sources of bias, or at least that the two sources of bias offset one another.''

In contrast, another strand of studies explores variations in health across employment status to test for justification bias, often finding nontrivial bias.  For instance, \cite{lindeboom2009health} identify justification bias by examining the impact of retirement on additional variation in self-assessed health, after accounting for objective health.\footnote{Similar strategy is adopted by\cite{kerkhofs1995subjective} and \cite{gannon2009influence}.} \cite{black2017justification} exploits a unique feature in an Australian longitudinal survey that asks an identical question about disability twice. They then test justification bias by variations in the difference of these two questions and variations of employment. 

Our approach distinguishes itself from previous studies through the integration of methodological advancements in regression discontinuity and the utilization of unique high-frequency data. The identification assumption that health changes should not occur abruptly following retirement becomes notably more plausible within a narrow time frame.

Additionally, this paper adds to broad discussions on the reliability of self-assessed health. Beyond justification bias, this literature concerns about other forms of reporting bias related to reference \citep{lindeboom2004cut}, threshold \citep{kapteyn2007vignettes}, or survey designs \citep{crossley2002reliability,clarke2006self,lumsdaine2013survey}. This paper helps address concerns regarding the credibility of self-assessed health as a widely adopted measure.

Furthermore, our work adds value to the literature exploring the health implications of retirement. This literature are generally inconsistent between findings on self-assessed health and on objective health \citep{garrouste2022there}. While discernible effect is observed on self-assessed health, the impact on objective health conditions is less clear. Despite the identification of causal effects on self-assessed health, reporting bias cannot be ruled out \citep{eibich2015, Rose2020}.\footnote{\cite{Rose2020} uses census data and also finds an immediate improvement in subjective health after the statutory retirement age. With a large sample size, the estimates are tight and the pattern is clear. However, objective health measures are unavailable in the census data.} Taking into account the dynamics of health, our approach offers a possibility to disentangle the endogeneity of self-assessed health resulting from reporting bias and changes in actual health.


The next section introduces the framework for the new testing method, followed by Section 3 that introduces the empirical strategy, background and data. Section 4 presents the main results based on two alternative identification assumptions. Additional relevant results, such as the results based on the conventional method, are provided in Section 6. Section 7 shows various robustness checks. Section 8 concludes.

\section{Conceptual Framework}\label{section:Conceptual Framework and Econometric Methods}
\ \ \ \
In this section, we first review the standard econometric framework of justification bias and a popular indirect testing method. Then we introduce our new method along with two alternative identification assumptions.

\subsection{Justification Bias}
\ \ \ \
Consider a simple statistical model of retirement:
\begin{equation}
R=\theta_0+\theta_{H}H+\theta_{W}^{'}W+e
\label{reg:retirement}
\end{equation}
 where $R$ is an indicator of retirement and $H$ is individual's actual health status.\footnote{With a binary indicator, this equation can be conceptualized as a linear probability model. $H$ can represent various health conditions and need not be a scalar. We conceptualize it as a single variable. Both simplifications are for ease of exposition, without a loss of generality.} We assume that larger values of $H$ correspond with worse health, so that the coefficient $\theta_H$ is expected to be positive. $W$ are other observable determinants of retirement, such as age and gender, whereas $e$ captures residual determinants of employment that are unobservable. To assess the overall impact of health, we would like to have a summary measure of the actual health $H$. Typical candidates include subjective measures that are self-assessed by the respondent, or sets of detailed health conditions that are more objective.

As objective measures usually consist of only a subset of conditions, they may miss important dimensions of health that are related to worker's retirement decision \citep{bound1991,blundell2023}. Instead, self-assessed health can serve as a comprehensive measure whereby individuals tend to weigh poor health conditions by the extent of disruption to them. Meanwhile, self-assessed health has also been widely used in studies on individuals' labor market outcomes beyond retirement.\footnote{See \cite{o2015health} for examples.}
 
A major concern for subjective health measures is the reporting bias.\footnote{We will use subjective health and self-assessed health interchangeably in this paper.} Suppose that the measure of subjective health $H^s$ combines the true health status $H$ with biases, of which two sources are typically considered: the classical measurement error $\nu$ and the state-dependent justification bias $\epsilon$.
\begin{equation}
    H^s=H+v+\epsilon
\end{equation}
Even though the measurement error is random, estimates of the effect $\theta_H$ will suffer from the attenuation bias and be biased towards zero. On the other hand, the justification bias indicates that, as retirees are more likely to use health as the reason to justify their retirement, the reporting bias $v$ may be positively correlated with the residuals in the retirement equation $e$, leading to exaggerated effect of health. 

To see these two biases, consider the OLS estimator $\hat{\theta}_H^{OLS}$ for Equation \eqref{reg:retirement}, using $H^s$ to proxy $H$. It can be shown to have the following asymptotic limit:
\begin{equation}
  \text{plim }  \hat{\theta}_H^{OLS} = \frac{Cov(R,H^s)}{Var(H^s)}
    =\frac{\theta_H Var(H)+Cov(e,\epsilon)}{Var(H)+Var(v)+Var(\epsilon)} 
    \label{reg:OLS}
\end{equation}
The object to which the OLS estimator converges may be greater or smaller than the parameter of interest $\theta_H$, depending on the justification bias $Cov(e, \epsilon)$, which is positively correlated, as well as the attenuation bias caused by larger variance in the denominator.\footnote{For results in Equation \eqref{reg:OLS}, the actual health $H$ is typically assumed to be orthogonal to the bias $\nu$ and $\epsilon$, as well as the unobserved retirement determinants $e$ after controlling for the observables $W$. We also apply the Frisch–Waugh–Lovell theorem to abstract from the observables $W$.}  \cite{stern1989measuring} proposes the usage of objective health conditions as instruments for the subjective health to recover the effect $\theta_H$. A number of subsequent studies, such as \cite{bound1991}, \cite{dwyer1999} and \cite{blundell2023},  infer the dominant type of bias by comparing the magnitude of OLS and IV estimates.

\subsection{The New Testing Method}
\ \ \ \
For our testing method, we start by modelling the conditional expectation of actual health status $H$ as a function of age $X$. General knowledge suggests that health status should exhibit a certain smoothness as age naturally progresses, while retirement may change the transition of health and the age function needs not to be smooth anymore.  Therefore, we allow the functional form to differ by retirement status $R$. To focus on our main idea, retirement $R$ is taken as exogenous in this framework. Therefore, the current framework is better to be taken as focusing on the impact of retirement on health, abstracting from the reverse effect of health on retirement.\footnote{Our solution to the issue of simultaneous causality is referred to the Empirical Strategy section.} 

Retirement may improve individuals' health by changing the lifestyle, such as having more leisure to do exercise or see doctors. Alternatively, it may fasten the pace of health deterioration by lowering consumption quality with reduced financial resources. The \textit{transition} of health thus may well be influenced by retirement. The \textit{level} of actual health, however, is unlikely to subject to abrupt change. This intuition is supported by the the fact that health is widely conceptualized as a stock following the influential work by \cite{grossman1972concept}, whereas retirement influences the investment on it. This observation motivates our first identification assumption: while the conditional expectation function of actual health is allowed to be non-smooth at ages when the individual retires $X=x^*$, it should remain as being continuous.\footnote{We take retirement as an absorbing stage so that the retirement age $x^*$ is clearly defined. The retirement may be a gradual process in reality for some people, but this fact should not weaken, if not strengthen, the assumption of continuity.}  
\begin{assumption}
 The conditional expectation of actual health H is a smooth function of age X. Taking retirement into account, the function is allowed to be non-smooth at the retirement age $ X=x^{*} $, but it remains being continuous. Specifically,
     \begin{align*}
  E[H | X, R] \equiv  & \: f(X,R) =(1-R)\cdot f_0(X) +  R\cdot f_1(X)  
  \shortintertext{where} 
   &  \lim_{\delta \rightarrow 0 } f_0(x^{*}-\delta) =     \lim_{\delta \rightarrow 0 } f_1(x^{*}+\delta)  
    \end{align*}
  and $f_0$,  $f_1$ are smooth functions of $x$ on the supports $\mathcal{X}$.
\end{assumption}
Now consider the self-assessed health $H^s= H + \nu + \epsilon $. The measurement error $\nu$ is random so that it is mean-independent with $E[\nu|X, R]=0$. By definition, justification bias indicates that the bias $\epsilon$ is state-dependent on the retirement status, with:
\begin{align*}
    \lambda \equiv E(\epsilon|X, R=1) > E(\epsilon|X, R=0) \equiv 0
\end{align*}
The conditional expectation of self-assessed health $H^s$ is thus given by:
 \begin{align*}
E[ H^s | X, R] = & E[H +   \nu + \epsilon | X, R]  \\
     = & (1-R)\cdot f_0(X) +  R\cdot f_1(X) + R\cdot E(\epsilon| X, R=1) \\
     = & (1-R)\cdot f_0(X) +  R\cdot \tilde{f_1}(X) 
      \shortintertext{ where }
      \tilde{f_1}(X) =&  f_1(X) + \lambda 
\end{align*}
 Therefore, under Assumption 1, we can obtain the following proposition regarding the identification of justification bias defined by $\lambda$.\footnote{Strictly, we assume away heterogeneity in the justification bias at the individual level $\epsilon$, so that $\epsilon=\lambda$.} 

\begin{proposition}
Suppose that Assumption 1 holds. Justification bias $\lambda$ is identified by the abrupt change in conditional expectation of self-assessed health around the retirement age $x^*$.
\end{proposition}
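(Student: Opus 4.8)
The plan is to read off the justification bias $\lambda$ as the size of the discontinuity in the observed regression function $x \mapsto E[H^s \mid X=x]$ at the threshold $x^*$, exactly in the manner of a sharp regression-discontinuity design. Since retirement is treated as an absorbing state and $x^*$ is the (exogenous) retirement age, an individual is working just below the threshold and retired just above it, so the conditioning value of $R$ switches deterministically from $0$ to $1$ as $X$ crosses $x^*$. I would therefore evaluate the conditional expectation of $H^s$ separately on each side of $x^*$, using the expression for $E[H^s\mid X,R]$ already derived above.

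First I would take the left limit. For $x<x^*$ we have $R=0$, and because $E[\nu\mid X,R]=0$ and $E[\epsilon\mid X,R=0]=0$, the observed regression function coincides with the actual-health function $f_0$, giving
\begin{equation*}
\lim_{\delta\to 0} E[H^s\mid X=x^*-\delta] = \lim_{\delta\to 0} f_0(x^*-\delta).
\end{equation*}
Next I would take the right limit. For $x>x^*$ we have $R=1$, and the conditional mean of the reporting bias is the constant $\lambda$, so the observed regression function equals $\tilde{f_1}=f_1+\lambda$, giving
\begin{equation*}
\lim_{\delta\to 0} E[H^s\mid X=x^*+\delta] = \lim_{\delta\to 0} f_1(x^*+\delta) + \lambda.
\end{equation*}

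Subtracting the two one-sided limits yields the jump in the observed regression function at $x^*$. The whole argument then turns on Assumption 1: because the conditional expectation of \emph{actual} health is continuous at $x^*$, the limits $\lim_{\delta\to 0} f_0(x^*-\delta)$ and $\lim_{\delta\to 0} f_1(x^*+\delta)$ coincide and cancel, leaving
\begin{equation*}
\lim_{\delta\to 0} E[H^s\mid X=x^*+\delta] - \lim_{\delta\to 0} E[H^s\mid X=x^*-\delta] = \lambda.
\end{equation*}
Hence $\lambda$ is identified as the size of the discontinuity, which is exactly what the proposition asserts.

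The algebra here is trivial limit arithmetic; the real content — and the only place anything can go wrong — is the role played by Assumption 1. The observed jump in principle mixes two potential sources of discontinuity: a genuine break in the \emph{level} of actual health at retirement and the reporting bias $\lambda$. The main obstacle is therefore conceptual rather than computational: one must invoke the continuity of the actual-health conditional expectation to rule out the first channel, so that any remaining discontinuity is attributable solely to justification bias. I would also record explicitly that the classical measurement error $\nu$ contributes nothing to either one-sided limit, since its conditional mean vanishes on both sides; this is what allows the jump to isolate the state-dependent component $\epsilon$ alone rather than confounding it with mean-zero noise.
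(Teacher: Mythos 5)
Your proof is correct and follows essentially the same route as the paper's: both compute the two one-sided limits of the conditional expectation of $H^s$ using $E[\nu\mid X,R]=0$, $E[\epsilon\mid X,R=0]=0$, and $E[\epsilon\mid X,R=1]=\lambda$, then invoke the continuity of actual health from Assumption 1 so that $\lim_{\delta\to 0}f_1(x^*+\delta)$ and $\lim_{\delta\to 0}f_0(x^*-\delta)$ cancel, leaving exactly $\lambda$. The only cosmetic difference is that you phrase the limits as conditioning on $X$ alone (a sharp-RD reading where $R$ switches deterministically at $x^*$), whereas the paper conditions explicitly on both $X$ and $R$; the algebra and the role of the assumption are identical.
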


\begin{proof}
 \begin{align*}
   &     \lim_{\delta \rightarrow 0 } E[H^s | X=x^{*}+\delta, R=1]  -  \lim_{\delta \rightarrow 0 } E[ H^s | X=x^{*}-\delta, R=0]   \\  =& \lim_{\delta \rightarrow 0 } \tilde{f_1}(x^{*}+\delta) -  \lim_{\delta \rightarrow 0 } f_0(x^{*}-\delta)  \\
 = & \lim_{\delta \rightarrow 0 } f_1(x^{*}+\delta) +  \lim_{\delta \rightarrow 0 }  E(\epsilon| X= x^{*} + \delta, R=1) -  \lim_{\delta \rightarrow 0 } f_0(x^{*}-\delta)  \\
 = & \lambda  &&\qedhere
 \end{align*} 
 \end{proof}

Assumption 1 implies that any sudden change in self-assessed health after retirement is more likely to be attributed to reporting bias rather than the alternation in real health. While it is plausible, we can further relax it if we also observe health conditions $H^o$ that are more objective. Under our alternative Assumption 2, only the sudden change in self-assessed health relative to objective health conditions is attributed to justification bias.

Formally, suppose the actual health $H$ can be decomposed into the observed conditions $H^o$ and the remaining unobserved conditions $q$, such that $H=H^o+q$. Instead of requiring $H$ to be continuous at the retirement age, Assumption 2 requires that the unobserved conditions $q$ is a continuous function of age. 

\begin{assumption}
     The unobserved health conditions $q$ are defined as the difference of the actual health $H$ and the observed health conditions $H^o$. Its conditional expectation function with respect to age is  continuous over the support $\mathcal{X} $, including the retirement age $ X=x^{*} $. Specifically,
     \begin{align*}
  E[q | X, R] \equiv  & \, g(X,R) =(1-R)\cdot g_0(X) +  R\cdot g_1(X)  
  \shortintertext{where}
   &  \lim_{\delta \rightarrow 0 } g_0(x^{*}-\delta) =     \lim_{\delta \rightarrow 0 } g_1(x^{*}+\delta) 
 \end{align*}
  and $g_0$,  $g_1$ are smooth functions of $x$ on the supports  
\end{assumption}
Under Assumption 2, we can obtain the second proposition regarding the identification of justfication bias $\lambda$:
\begin{proposition}
    Suppose that Assumption 2 holds. Justification bias $\lambda$ is identified by the abrupt change in conditional expectation of self-assessed health relative to objective health around the retirement age $x^*$.
\end{proposition}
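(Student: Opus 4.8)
The plan is to mirror the argument used for Proposition 1, but to replace the self-assessed measure $H^s$ with the \emph{difference} $H^s - H^o$, so that the role previously played by the continuity of actual health $H$ is now played by the continuity of the unobserved residual $q$ guaranteed by Assumption 2. The crucial observation is that subtracting the observed conditions $H^o$ from $H^s$ eliminates the objective component entirely, leaving exactly the part of health whose conditional expectation is assumed continuous at $x^*$.

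First I would substitute the decomposition $H = H^o + q$ into the reporting model $H^s = H + \nu + \epsilon$ to obtain
\begin{align*}
    H^s - H^o = q + \nu + \epsilon.
\end{align*}
Second, I would take conditional expectations of both sides given $(X,R)$, using $E[\nu | X, R] = 0$ and the state-dependence $E[\epsilon | X, R] = R\cdot\lambda$, which yields
\begin{align*}
    E[H^s - H^o | X, R] = (1-R)\, g_0(X) + R\,\tilde{g}_1(X), \qquad \tilde{g}_1(X) = g_1(X) + \lambda.
\end{align*}
Third, I would form the two one-sided limits of this conditional expectation at the retirement age --- approaching from the right along $R=1$ and from the left along $R=0$ --- and take their difference. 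The terms $g_1$ and $g_0$ cancel by the continuity condition in Assumption 2, leaving precisely $\lambda$, exactly as in the proof of Proposition 1.

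The mathematics here is essentially routine once the difference $H^s - H^o$ is formed; the substantive content lies in why the construction is legitimate. The main point to verify carefully is that $H^o$ drops out cleanly, which requires that the objective measure enters $H^s$ only through its contribution to true health $H$ and carries no reporting bias of its own. Granting this, $E[H^s - H^o | X, R]$ isolates $q + \nu + \epsilon$, and the only source of a discontinuity at $x^*$ is the jump $\lambda$ in the justification bias --- this is the ``difference-in-differences'' reading in which objective health serves as the control group. The genuine obstacle is thus conceptual rather than algebraic: establishing that continuity of $q$ is the correct, and weaker, replacement for continuity of $H$, since any smooth trend common to subjective and objective health is differenced out and only an abrupt \emph{relative} movement is attributed to bias.
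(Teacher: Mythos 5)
Your proof is correct and takes essentially the same route as the paper's: the paper likewise substitutes $H^s = H^o + q + \nu + \epsilon$ so that the $H^o$ terms cancel, then invokes $E[\nu \mid X,R]=0$, the continuity of $E[q \mid X,R]$ from Assumption 2, and the state-dependence of $\epsilon$ to isolate $\lambda$. The only cosmetic difference is that you form the single variable $H^s - H^o$ and take the jump of its conditional expectation, whereas the paper takes the difference of the two jumps; by linearity of conditional expectations and limits these are the same computation.
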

\begin{proof}
   \begin{align*}
   &    \big \{ \lim_{\delta \rightarrow 0 } E[H^s | X=x^{*}+\delta, R=1]  -  \lim_{\delta \rightarrow 0 } E[ H^s | X=x^{*}-\delta, R=0]  \big \}  \\ 
      & \quad  -  \big \{ \lim_{\delta \rightarrow 0 } E[H^o | X=x^{*}+\delta, R=1]  -  \lim_{\delta \rightarrow 0 } E[ H^o | X=x^{*}-\delta, R=0]  \big \}  \\  
 =  &  \big \{    \lim_{\delta \rightarrow 0 } E[H^o+q+\nu+\epsilon | X=x^{*}+\delta, R=1]  \\
 & \quad  -  \lim_{\delta \rightarrow 0 } E[ H^o+q+\nu+\epsilon | X=x^{*}-\delta, R=0]   \big \}  \\ 
    &  \quad   -  \big \{ \lim_{\delta \rightarrow 0 } E[H^o | X=x^{*}+\delta, R=1]  -  \lim_{\delta \rightarrow 0 } E[ H^o | X=x^{*}-\delta, R=0]  \big \}  \\ 
  =  &  \lim_{\delta \rightarrow 0 } E[q | X=x^{*}+\delta, R=1]  -  \lim_{\delta \rightarrow 0 } E[ q | X=x^{*}-\delta, R=0]  \\
  & \quad  + \lim_{\delta \rightarrow 0 } E[ \epsilon | X=x^{*}+\delta, R=1]  -  \lim_{\delta \rightarrow 0 } E[ \epsilon | X=x^{*}-\delta, R=0]  \\
 = & \lim_{\delta \rightarrow 0 } g_1(x^{*}+\delta)   -  \lim_{\delta \rightarrow 0 } g_0(x^{*}-\delta)    +  \lim_{\delta \rightarrow 0 }  E(\epsilon| X= x^{*} + \delta, R=1) \\
 = & \lambda   &&\qedhere
 \end{align*}  
\end{proof}

A remaining challenge is that the actual retirement age $x^*$ is notoriously difficult to pinpoint. Proposition 3 suggests that the above identification strategies are also applicable to a small time frame around the statutory retirement age instead of the actual retirement. The statutory retirement age is regulated by pension laws, at which workers are more likely to retire with extra financial incentives.\footnote{The background of statutory retirement ages in Singapore is provided in the next section.} To stay focused, please referred to Appendix A for detailed discussions and the proof of Proposition 3.

\begin{proposition}
The abrupt change in self-assessed health around the actual retirement will translate into an abrupt change within a short interval around the statutory retirement age, as long as the mass of individuals who are induced to retired by the statutory retirement age is positive.
\end{proposition}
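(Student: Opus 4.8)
The plan is to cast the statutory retirement age, which I denote $\bar{x}$, as the threshold of a fuzzy regression-discontinuity design and to show that the reduced-form jump in self-assessed health at $\bar{x}$ equals the justification bias $\lambda$ scaled by the first-stage jump in the retirement probability. First I would partition the population near $\bar{x}$ into the three usual types: those already retired on both sides of $\bar{x}$ (always-takers), those not retired on either side (never-takers), and the compliers who are induced to retire precisely upon reaching $\bar{x}$, so that their own actual retirement age is $x^{*}=\bar{x}$. The mass of compliers is exactly $\lim_{\delta \to 0} E[R \mid X = \bar{x}+\delta] - \lim_{\delta \to 0} E[R \mid X = \bar{x}-\delta]$, which is the ``mass of individuals induced to retire'' referenced in the statement and which I take to be strictly positive.

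Next I would compute the reduced-form discontinuity $\lim_{\delta \to 0} E[H^s \mid X = \bar{x}+\delta] - \lim_{\delta \to 0} E[H^s \mid X = \bar{x}-\delta]$ by conditioning on type. Under the standard regression-discontinuity smoothness condition (continuity at $\bar{x}$ of the type shares and of the potential-outcome functions in age), always-takers and never-takers do not change retirement status across $\bar{x}$ and so contribute continuously, leaving no jump; only the compliers contribute. Invoking the standard fuzzy-RD representation, the ratio of the reduced-form jump to the first-stage jump then equals the local effect $E[H^s(1)-H^s(0) \mid \text{complier}]$, where $H^s(1)$ and $H^s(0)$ denote self-assessed health under retirement and non-retirement just around $\bar{x}$.

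The crucial step is to evaluate this complier effect using Assumption 1. Writing $H^s(1)-H^s(0) = [H(1)-H(0)] + \epsilon$, I would argue that because a complier's own retirement age is $\bar{x}$, the continuity of the conditional expectation of actual health at the retirement age forces the instantaneous level effect $H(1)-H(0)$ to vanish in the limit as the window shrinks, so that $f_1$ and $f_0$ join continuously there; what remains is exactly the state-dependent bias $\epsilon=\lambda$. Consequently the reduced-form jump equals $\big(\lim_{\delta \to 0} E[R \mid \bar{x}+\delta] - \lim_{\delta \to 0} E[R \mid \bar{x}-\delta]\big)\cdot \lambda$, which is nonzero if and only if $\lambda \neq 0$ given a positive complier mass, so that the abrupt change at the actual retirement is transmitted, undiluted after rescaling by the first stage, to an abrupt change within a short interval around the statutory age.

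I expect the main obstacle to be the composition change in the conditioning set $\{R=1\}$ at $\bar{x}$: the group of retirees just above $\bar{x}$ mixes long-retired always-takers with freshly-retired compliers, so one cannot simply reuse the population functions $f_0,f_1$ and read off $\lambda$ by conditioning on $R$ directly. Circumventing this is precisely what forces the argument through the reduced form and first stage rather than through $E[H^s \mid X, R]$, and it is also where I must take care to apply Assumption 1's continuity to the complier subpopulation (for whom $x^{*}=\bar{x}$) rather than to the pooled retired population. A secondary point to verify is that the monotonicity and no-manipulation conditions underlying the fuzzy-RD representation are innocuous here, which follows because retirement is absorbing and the statutory age provides a one-directional financial incentive to retire.
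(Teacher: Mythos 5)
Your proposal is correct and follows essentially the same route as the paper's own proof: both decompose the population near the statutory age into never-takers, always-takers, and compliers, use continuity of $f_0$ and $\tilde{f_1}$ to kill the first two groups' contributions, and show the observed jump equals $\lambda$ times the complier mass $P^C$, so that $\lambda$ is recovered by rescaling with the first stage whenever $P^C>0$. Your fuzzy-RD/potential-outcomes phrasing and the explicit remark about the compositional mixing of always-takers and fresh compliers in $\{R=1\}$ are just a more formal dressing of the identical argument in Appendix A.
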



\section{Background, Empirical Strategy, and Data}
\subsection{Background}
\ \ \ \
We apply our method using the SLP data. In Singapore, the Retirement and Re-employment Act (RRA) was introduced in 2012 to deal with challenges posed by an aging workforce, which encourages employers to retain and rehire older workers. According to RRA, employers are not allowed to force their employees to retire before the MRA. Employers must also provide the option of continued employment, known as re-employment, to eligible workers who reach the MRA, until they reach the maximum re-employment age (REA). If a worker is eligible for re-employment but the employer can not offer a suitable position, the employer must assist the worker in finding another job or provide a lump-sum employment assistance payment. It is noteworthy that even though the government urges employers to extend re-employment contracts to eligible employees, the associated penalty is minor.

Meanwhile, the cornerstone of Singapore's pension system is the Central Provident Fund (CPF), which was established in 1955. This fund operates as a compulsory defined-contribution scheme. Both employers and workers make monthly contributions, which are subject to an earning limit. Upon attaining the age of 55, individuals become eligible to make withdrawals from their CPF, provided that they maintain a minimum balance. After reaching the pension eligibility age (PEA), individuals are allowed to receive monthly pension benefits.

\begin{figure}[H]
\centering
\begin{minipage}{0.9\textwidth}
\begin{tikzpicture}[snake=zigzag, line before snake = 10mm, line after snake = 10mm]

    \draw (-2,0) -- (13,0);

    \foreach \x in {1,5.5,9.5}
      \draw (\x cm,3pt) -- (\x cm,-3pt);

    \node[align=center] at (1,-1) {\textbf{MRA}};
    \node[align=center] at (5.5,1.75) {\textbf{PEA}};
    \node[align=center] at (9.5,-1) {\textbf{REA}};

  \node[align=center] at (1.5, -2) {{\footnotesize Employees cannot be forced to retire}\\ {\footnotesize prior to this age }  };
    \node[align=center] at (5.5, 1) {{\footnotesize Entitled to receive monthly pension benefits } };
    \node[align=center] at (9.5, -2) { {\footnotesize Employers should re-employ} \\ {\footnotesize the elder until this age} };

 \end{tikzpicture}
    \caption{Statutory Retirement Ages in Singapore}
    \vspace{0.2cm}
   
         \scriptsize {\textit{Notes:} This graph lists three statutory retirement ages in Singapore. MRA: Minimum Retirement Age; PEA: Pension Eligibility Age; REA: Maximum Re-Employment Age.} 
          \end{minipage}
   
\end{figure}

Therefore, there are three statutory retirement ages at which individuals are more likely to exit the labor market: the MRA, the PEA, and the REA. As prescribed by law, these ages are exogenous to individual's private characteristics related to health and employment, such as their socioeconomic status. Hence, these ages are plausibly good candidates of instrumental variables, once the smooth effects of aging on retirement are controlled for.

\begin{table}[H]
  \centering
  \caption{ Statutory Retirement Ages of Different Cohorts}
  \begin{minipage}{0.99\textwidth}
{  \footnotesize
  \tabcolsep=0.03cm
    \begin{tabular}{p{0.22\textwidth}>{\centering}p{0.08\textwidth}|>{\centering}p{0.27\textwidth}>{\centering}p{0.08\textwidth}|>{\centering}p{0.25\textwidth}>{\centering\arraybackslash}p{0.08\textwidth}}
    \hline
    \hline
    \multicolumn{2}{c|}{\textbf{MRA}} & \multicolumn{2}{c|}{\textbf{PEA}} & \multicolumn{2}{c}{\textbf{REA}} \\  
    \multicolumn{1}{c}{Cohort} &  \multicolumn{1}{c|}{Age}& \multicolumn{1}{c}{Cohort} &  \multicolumn{1}{c|}{Age} &  \multicolumn{1}{c}{Cohort} &  \multicolumn{1}{c}{Age} \\
    \hline
    \multicolumn{1}{c}{Birth $<$ 1960.7}  & 62    & 1939.1 $\leq$ Birth $<$ 1950.1 & 62    & Birth $<$ 1952.7 & 65 \\
     \multicolumn{1}{c}{Birth $\geq$ 1960.7} & 63    & 1950.1 $\leq$ Birth $<$ 1952.1 & 63    & 1952.7 $\leq$ Birth $<$ 1955.7 & 67 \\ 
          &       & 1952.1 $\leq$ Birth $<$ 1954.1 & 64    & Birth $\geq$ 1955.7 & 68 \\
          &       & Birth $\geq$ 1954.1 & 65    &       &  \\
    \hline
    \end{tabular}%
  \label{table:bg1}  }

\scriptsize \textit{Notes:} This table presents the cohort-specific MRA, PEA, and REA. Birth refers to the birth year and month. MRA: Minimum Retirement Age; PEA: Pension Eligibility Age; REA: Maximum Re-Employment Age. 
  \end{minipage}

      \end{table}%

Note that these statutory retirement ages are subject to policy changes. The MRA was originally set as 62 since 1999 until June 2022, after which it was increased to 63 in July 2022. The REA stood at 65 initially and elevated to 67 since July 2017, after which it underwent further revisions. On July 1, 2022 the REA became 68, and it will rise to 70 by 2030. The PEA was initially fixed at 60 but was raised to 62 in 1999 to align with the MRA. It was subsequently increased to 63 in 2012, 64 in 2015, and 65 in 2018. Table \ref{table:bg1} summarizes the statutory retirement ages by different birth cohorts, and we calculate individual-specific statutory ages according to this table.

\subsection{Empirical Strategy} 
\ \ \ \
By taking retirement as exogenous, the previous conceptual framework focuses on the effect of retirement on health and abstracts from the reverse effect of health on retirement. Empirically, the observed variation in health before and after retirement merely reflects the correlation. 
Therefore, our empirical strategy involves two steps. Firstly, to disentangle the influence of retirement on health from the reverse effect of health on retirement, we use the statutory retirement ages as instruments, following the burgeoning literature that aims to identify the causal effects of retirement on health, cognition, and healthcare utilization. While the first step allows the estimate to capture the effect of retirement on self-assessed health, it encompasses both genuine health changes and justification bias. Secondly, to differentiate between these two, we apply the two alternative identification assumptions introduced in the conceptual framework, by narrowing the time frame around the statutory retirement age. The second step is also facilitated by the unique high-frequency data on employment and various health measures.

For the first step, to identify the causal effect of retirement, two main methods are typically used in the literature: the regression discontinuity design (RD) and the fixed-effect instrumental variable approach (FE-IV). Both methods exploit the discontinuous jump of retirement probability at the statutory retirement ages after controlling for the natural and smooth impacts of aging. The fuzzy RD can also be considered as an IV approach in essence. The difference is that RD approach allows different parametric forms of the age function before and after the cutoff, whereas the FE-IV approach, while usually controls for a smooth age function, incorporates the individual fixed effects to absorb any time-invariant omitted variables.

We follow both approaches by adding individual fixed effects and controlling for flexible functions of age. We use the binary variable that indicates whether the statutory retirement age has been reached as the instrument. While in theory the MRA, the PEA, and the REA can shift the retirement likelihood, we find that the PEA and the REA have very small effects empirically, so that the first stage is weak. It is also a cleaner setting to focus on the change in health around a single cutoff. Therefore, we focus on the MRA in our main analysis. 

The following two equations are estimated in the two-stage least squares estimation:
\begin{align}
R_{it}=&\alpha+\rho \cdot \mathbf{1}\{X_{it}>m^*_{i} \}+g_R(X_{it})+\phi_R^{'}W_{it}+\theta_i+e_{it}
 \label{eqIV1} \\
H_{it}=&\beta+\gamma \hat{R}_{it}+g_H(X_{it})+ \phi_H^{'}W_{it}+\sigma_{i}+\epsilon_{it}
 \label{eqIV2}
\end{align}
where $H_{it}$ is the health outcome, measured by self-assessed health or objective health conditions, and $R_{it}$ is the indicator variable of retirement. $m^*_i$ denotes the MRA, which may vary across individuals depending on their birth cohorts, and $\mathbf{1}\{X_{it}>m^*_{i} \}$ is the indicator of reaching it. $X_{it}$ denotes the age centered around the MRA, so that $\rho$ and $\gamma$ capture the abrupt changes at the MRA. $g_R()$ and $g_H()$ are functions of age that captures the natural progress of retirement and health deterioration.\footnote{In our main analysis, we control for a smooth function of age with second order polynomials, akin to the FE-IV approach. We also allow the age function to differ before and after the cutoff in the robustness check.} 
$W_{it}$ is a vector of additional control variables. $\theta_i$ and $\sigma_i$ represent individual fixed effects, and $e_{it}$ and $\epsilon_{it}$ are idiosyncratic errors. What sets our approach apart from a standard fuzzy RD design is the incorporation of individual fixed effects. This directs our focus towards the variation over time, being more coherent with our conceptual framework that focuses on the within-individual variation. 

The above first step only identifies the effect of retirement on self-assessed health. Our second step further separates the reporting bias and the change in real health by examining any abrupt change in health after retirement. To achieve this goal, we repeat estimating Equation \ref{eqIV1} and \ref{eqIV2} using sub-samples under various intervals around the MRA. As revealed by Proposition 3, any significant effect of retirement on self-assessed health in a sufficiently small interval around the MRA is indicative of the existence of justification bias.

Although the RD or the FE-IV design should, in theory, identify the effect at the cutoff. However, standard survey data suffer from two issues to apply our method. First, they are sensitive to parametric functional form approximations. This issue is exacerbated by the unavoidable interpolation or extrapolation due to the sparsity of biennial survey designs. Second, even with correctly-specified functional forms, they may lack sufficient statistical power to make meaningful conclusions. The high-frequency data allow us to obtain enough observations within a much narrower time frame. This not only helps mitigate the bias stemming from potential misspecification in the functional form but also enhances the statistical power of our tests.

\subsection{Data, Sample, and Measures} \label{empirical}
\ \ \ \
The data set we exploit is the SLP, a monthly panel survey conducted since August 2015.\footnote{More information about SLP can be found in \cite{vaithianathan2021}} We use data until January 2020 to avoid the disruption in employment and health status due to COVID-19, which includes 54 waves in total. The design is similar to the HRS in the US, but different sets of questions are asked at four different frequencies: monthly, quarterly, annually, and one-off. The data set includes a wide range of information about economic, social, and health conditions of older adults in Singapore. Eligible respondents are Singaporeans (citizens and permanent residents) aged 50 and over and their spouses, with a monthly sample size ranging from 7,000 to 9,000 individuals. The data set is unique in that it collects information every month but maintains similar survey designs as counterparts in many other countries which are typically biennial, such as the HRS in the US, the English Longitudinal Study of Ageing (ELSA), the Survey of Health, Ageing and Retirement (SHARE) in Europe, and the China Health and Retirement Longitudinal Study (CHARLS). In particular, in additional to labor market outcomes, SLP incorporates both subjective health measures and various objective health conditions at the monthly level. This rich information is helpful for implementing test strategies under our identification assumptions.

For our analyses, we restrict observations to individuals between age 55 and 75. Observations reporting ``homemaker'', ``student'', or ``others'' are excluded from analyses. Meanwhile, the RRA is applicable only to individuals who work as employees but not to those self-employed. Therefore, observations reporting ``self-employed'' are also excluded. Additionally, in line with the conceptual framework, we exclude individuals with no variation in labor force participation throughout the study period. Individuals reporting ``working for pay'', ``unemployed and looking for work'', ``temporarily laid off'', or ``on sick or other leave'' are defined as \textit{in the labor force}. 

For the measures, an individual's age used in our analyses is measured in months. We also center it around the MRA, so that the age at which an individual reaches the MRA is normalized as 0. 
We define retirement based on an individual's employment. An individual is considered as retired if she did not report ``working for pay'' in a multiple-choice questions about the current job status. We drop observations with more than one answers to maintain a cleaner sample.

Following the literature, health variables can be classified into two broad categories: subjective and objective health. For subjective health, it is measured by the self-assessed health with a scale from 1 to 5, where 1 represents ``excellent''; 2 ``very good''; 3 ``good''; 4 ``fair''; and 5 ``poor''. On top of this raw measure, we also construct a binary variable which indicates whether the subjective health is poor. If a respondent reported ``fair'' or ``poor'', this binary variable is assigned a value of 1, and 0 otherwise. As for objective health, it is based on individuals’ self-reported objective health conditions. Specifically, respondents were asked to answer the question: ``In last month, did a doctor tell you that you have any of the following conditions?'', where the conditions including hypertension, diabetes, cancer, heart problems, stroke, arthritis, and psychiatric problems. The number of bad health conditions is also summed up as a single index to act as a summary of individuals' objective health status.\footnote{Note that while these specific health conditions are self-reported rather than obtained from clinical diagnoses, they are not involved in a comparison with other individuals. Therefore, these measures are not subject to the biases caused by varying reference points, which may be different across individuals and over time.} In the robustness checks, we consider additional time-varying control variables: survey year dummies, health insurance expenditure, and marital status.\footnote{The questionnaire includes a section on respondents' health spending. The module asks the amount spent by respondents and their spouses on various health-related items in the previous month, among which health insurance is one of the items. To avoid measurement errors, we only consider a binary status of whether having any positive expenditure on health insurance.}  Table \ref{tab:Summary_Statistics} reports summary statistics of the main variables used in our analyses.

\begin{table}[H]
  \centering
  \footnotesize
  \caption{Summary Statistics of Key Variables}
 \tabcolsep=0.12cm
\begin{minipage}{0.99\textwidth}
   \footnotesize
   \begin{tabular}{p{0.33\textwidth}>{\centering}p{0.12\textwidth}>{\centering}p{0.15\textwidth}>{\centering}p{0.15\textwidth}>{\centering}p{0.08\textwidth}>{\centering\arraybackslash}p{0.08\textwidth}}
    \hline
    \hline
    \textbf{Variables} & \textbf{Obs.} & \textbf{Mean} & \textbf{Std.} & \textbf{Min} & \textbf{Max} \\
    \hline
    Retirement & 93,818 & 0.398 & 0.489 & 0     & 1 \\
     $\mathbf{1}$[Age$>$MRA]   & 93,818 & 0.603 & 0.489 & 0     & 1 \\
    Self-assessed Health & 93,818 & 3.262 & 0.869 & 1     & 5 \\
    Poor Health Indicator & 93,818 & 0.389 & 0.488 & 0     & 1 \\
    Objective Health Index & 93,818 & 0.285 & 0.671 & 0     & 7 \\
   \quad  Hypertension & 93,818 & 0.119 & 0.324 & 0     & 1 \\
  \quad  Diabetes & 93,818 & 0.080 & 0.271 & 0     & 1 \\
   \quad    Cancer & 93,818 & 0.012 & 0.110 & 0     & 1 \\
   \quad    Heart Problems & 93,818 & 0.031 & 0.172 & 0     & 1 \\
   \quad    Stroke & 93,818 & 0.006 & 0.076 & 0     & 1 \\
    \quad   Arthritis & 93,818 & 0.031 & 0.174 & 0     & 1 \\
    \quad   Psychiatric Problems & 93,818 & 0.006 & 0.078 & 0     & 1 \\
    Age  & 93,818 & 1.509 & 4.905 &  -8  & 13 \\
    Health Insurance & 93,818 & 0.253 & 0.435 & 0     & 1 \\
    Marriage Status & 93,802 & 0.790 & 0.408 & 0     & 1 \\
    \hline
    \end{tabular}%
  \label{tab:Summary_Statistics}

 \scriptsize \textit{Notes:} This table shows the mean, standard deviation, minimum and maximum value of each variable used for analyses.   For Self-assessed Health, a larger value indicates worse status. 
  Objective Health Index counts the number of chronic conditions a respondent had in the previous month. Here, we report age on yearly basis, and it is centralized around the MRA, the same as the $X_{it}$ in Eqs \eqref{eqIV1} and Eqs \eqref{eqIV2}.

    \end{minipage}
\end{table}%


\section{Main Results}
\ \ \ \ 
In what follows, we first show the effect of retirement on self-assessed health based on the FE-IV strategy. Then we apply our identification strategy to separate the reporting bias from the changes in actual health.

\subsection{The Effect of Retirement on Self-Assessed Health}
\ \ \ \ 
To isolate the effect of retirement on health from their correlation, we first adopt the FE-IV approach, following the extensive literature on the health effect of retirement. The basic intuition is that after controlling for the smooth effect of aging on retirement, the retirement likelihood jumps once individuals reach the statutory retirement age due to policy incentives. Whether an individual has reached the statutory retirement age thus can serve as an instrumental variable under the conditional instrumental relevance and the conditional exogeneity assumptions. 

Figure \ref{fig:MRA, Retirement and Self-Assessed Health} presents changes in the retirement probability and self-assessed health before and after reaching the MRA. We can observe a noticeable increase in the retirement probability once individuals reach the MRA. The graph on the right of Figure \ref{fig:MRA, Retirement and Self-Assessed Health} displays changes in self-assessed health before and after the MRA. Although there seems to be no discernible jump, the transition of health, as reflected by the slope, appears to be upward-sloping and steeper after the MRA, suggesting a deterioration of self-assessed health.

\begin{figure}[H]
\begin{minipage}{0.5\linewidth}
  \centering
\includegraphics[width=1\linewidth]{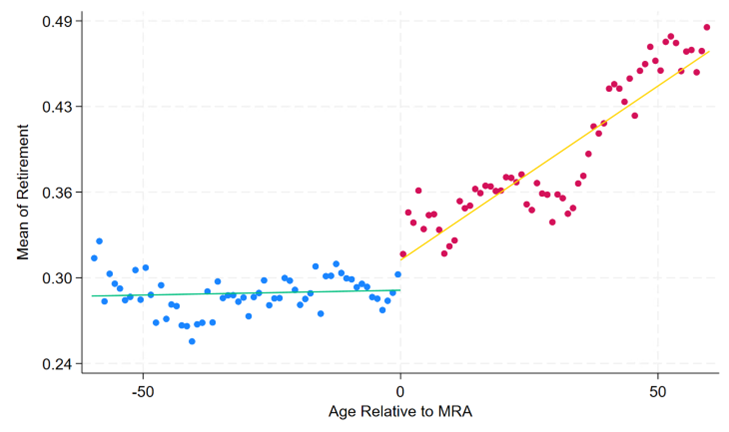}
\end{minipage}%
\begin{minipage}{0.5\linewidth}
  \centering
\includegraphics[width=1\linewidth]{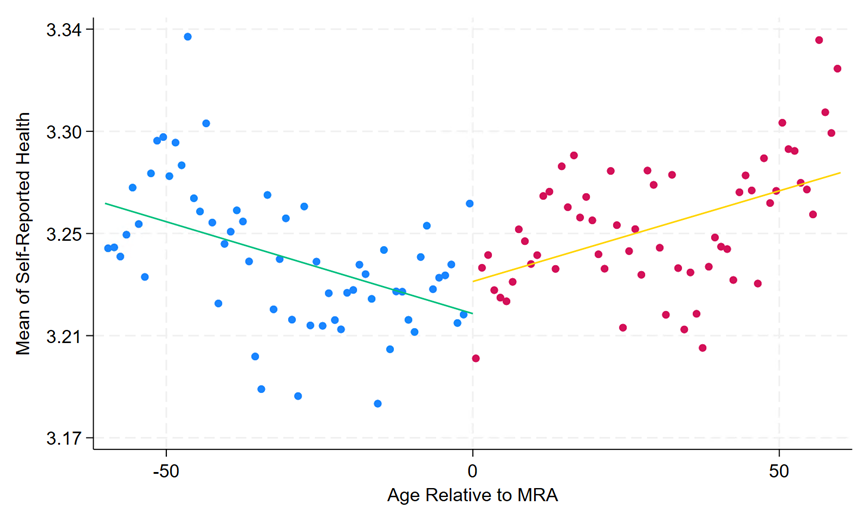}
\end{minipage}
\caption{MRA, Retirement and Self-Assessed Health}
\label{fig:MRA, Retirement and Self-Assessed Health}

\vspace{0.2cm}
    \scriptsize \textit{Notes:} Each point represents the average of retirement and self-assessed health in each age. Age,  measured in month, is centered around the Minimum Retirement Age (MRA), which is calculated for each individual based on their cohorts.
\end{figure}

Estimates presented in Table \ref{tab:SAH1} show that reaching the MRA increases the retirement probability by 4.62 percentage points, whereas retirement worsens self-assessed health by approximately one standard deviation. This detrimental effect of retirement on health is in line with the existence of justification bias. However, this is not the only explanation, as it might also capture the effect of retirement on actual health.\footnote{In particular, it may capture the bias caused by any misspecification of the age functions in the health equation.} To separate the effect of retirement on actual health and the reporting bias induced by retirement, we proceed with the analysis invoking our first identification strategy. Nevertheless, this baseline result, without restrictions on time frames around retirement, serves as the benchmark for our following sub-sample analyses.

\begin{table}[H]
	\centering
	\footnotesize
  \caption{The Effects of Retirement on Self-Assessed Health Based on The Full Sample}
  \begin{minipage}{0.8\textwidth}
  \tabcolsep=0.1cm
\begin{tabular}{p{0.36\textwidth}>{\centering}p{0.3\textwidth}>{\centering\arraybackslash}p{0.3\textwidth}}

    \hline
    \hline
     & \textbf{1st Stage} & \textbf{2nd Stage}\\
     & Retirement & Self-Assessed Health\\     
    \midrule
    $\mathbf{1}$[Age$>$MRA] & 0.0462***  &   \\
       &  (0.0144)  &   \\  
  Retirement &  &   1.087** \\  
       &   &  (0.492)  \\    
\hline    
     Individual Fixed-Effect & Yes   & Yes  \\
    Observations & 93,818 & 93,818 \\
    Number of individuals  & 2,898 & 2,898 \\      
    \hline
    \end{tabular}
  \label{tab:SAH1}
    \scriptsize  \textit{Notes}: Self-assessed Health has been standardized. Standard errors clustered at individual are in parentheses, ***$p<0.01$, **$p<0.05$, * $p<0.1$. The second order polynomial function of age is controlled for all results.
    \end{minipage}
\end{table}%

\subsection{The Test of Justification Bias Based on Assumption 1}
\ \ \ \
In the previous subsection, we use the FE-IV method to rule out the effect of health on retirement. However, the resulting impact of retirement on self-assessed health can be due to either the shift in actual health or the reporting bias. In this subsection, we intend to separate these two by progressively narrowing the interval around the MRA. The first identification assumption we propose is that any sudden decline in subjective health immediately after retirement is likely to be driven by the reporting bias rather than the change in real health. Under this assumption, any effect estimated on a sufficiently small window surrounding the MRA will be considered as evidence for justification bias.

Table \ref{tab:main_as1} presents the two-stage least-squares (2SLS) results for the impact of retirement on self-assessed health at an interval of 10, 20, 30, 40, 50, or 60 months around the MRA.\footnote{The subsamples are symmetrically centered around the MRA. For instance, the result for the 60-month interval is derived from a period that spans 60 months both before and after the MRA.} The estimates reveal that retirement significantly affects self-assessed health in a window of 50 months and 60 months, similar to results based on the full sample. However, as the interval shrinks, the effect of retirement on self-assessed health remarkably diminishes and becomes statistically insignificant. By contrast, estimates in the first stage remain stable across intervals. These results demonstrate no immediate effect of retirement on self-assessed health and thus no evidence of the justification bias.

\begin{table}[H]
	\centering
	\footnotesize
  \caption{The Test of Justification Bias Based on Assumption 1}
  \begin{minipage}{0.99\textwidth}
  \tabcolsep=0.18cm
\begin{tabular}{p{0.24\textwidth}>{\centering}p{0.1\textwidth}>{\centering}p{0.1\textwidth}>{\centering}p{0.1\textwidth}>{\centering}p{0.1\textwidth}>{\centering}p{0.1\textwidth}>{\centering\arraybackslash}p{0.1\textwidth}}

    \hline
    \hline
     & 10 Months & 20 Months & 30 Months & 40 Months & 50 Months & 60 Months \\
    \midrule
      \multicolumn{7}{l}{\textbf{1st Stage}} \\   
  \qquad  $\mathbf{1}$[Age$>$MRA] & 0.0455*** & 0.0345*** & 0.0385*** & 0.0435*** & 0.0390*** & 0.0417***  \\
          & (0.0106) & (0.0125) & (0.0138) & (0.0145) & (0.0147) & (0.0147)  \\
   \multicolumn{3}{l}{\textbf{2nd Stage}}    &  &  &  &      \\
 \qquad    Retirement & 0.137 & 0.0719 & 0.0669 & 0.669 & 1.369** & 1.327**  \\
          & (0.424) & (0.472) & (0.408) & (0.436) & (0.679) & (0.624)  \\
          \hline
    Individual FE & Yes   & Yes   & Yes   & Yes   & Yes   & Yes   \\
    Observations & 12,040 & 23,044 & 33,462 & 42,942 & 51,426 & 59,321 \\
    Number of individuals & 1,044 & 1,299 & 1,554 & 1,735 & 1,984 & 2,177 \\
    \hline
    \end{tabular}
  \label{tab:main_as1}%

    \scriptsize \textit{Notes}: This table presents the estimated effects of retirement on self-Assessed health under various time frames. Self-assessed Health has been standardized. Standard errors clustered at individual are in parentheses, ***$p<0.01$, **$p<0.05$, * $p<0.1$. The second order polynomial function of age is controlled for all results.
    \end{minipage}
\end{table}%

However, the insignificance following interval reduction may result from imprecise estimates rather than a true effect of zero. As the interval narrows, the sample size decreases, which may lead to increased standard errors of estimators. Interestingly, the standard errors within narrower intervals become even smaller than the ones under wider intervals, which suggests that the statistically insignificant effects are mainly driven by the decrease in magnitude of coefficients rather than the increase in standard errors. Therefore, the null effect is unlikely to be driven by a lack of statistical power with a smaller sample size. 

To show the validity of our test, we further examine the estimates and standard errors under finer intervals. Figure \ref{fig:self-assessed Health (10-60 Months)} presents the results by reducing the interval every one month gradually. The estimates present a pattern similar to Table \ref{tab:main_as1}, with the effects approaching zero as the interval shrinks. Meanwhile, the standard errors remain stable regardless of the sample size.


\begin{figure}[H]
  \centering
\begin{minipage}{0.9\textwidth}
 \qquad \includegraphics[height=7cm]{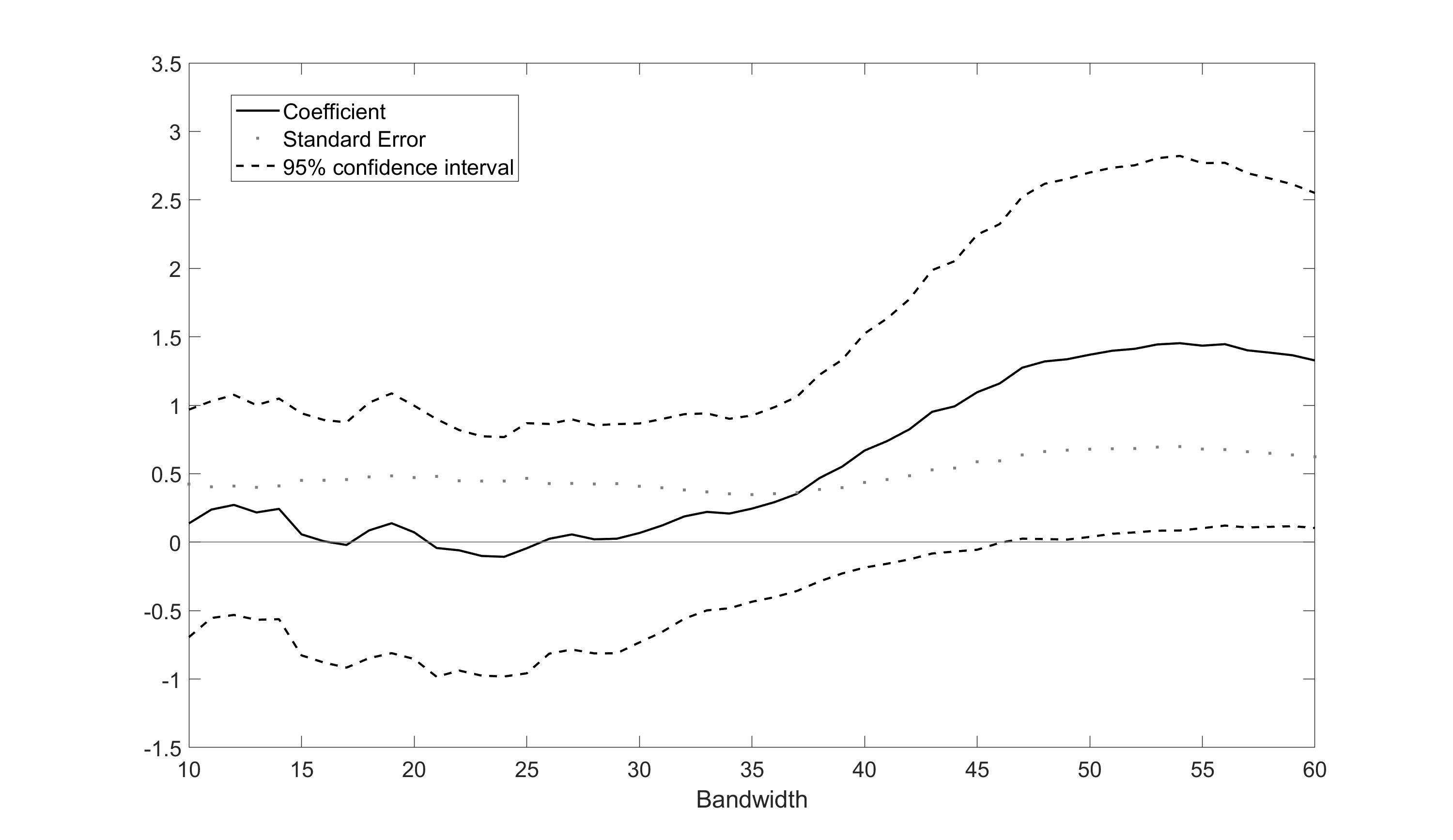}
  \caption{The 2SLS Coefficients and Standard Errors Under Finer Intervals} 
    \label{fig:self-assessed Health (10-60 Months)}
\vspace{0.2cm}
   \scriptsize \textit{Notes:} This figure depicts the estimated 2SLS coefficients and corresponding standard errors for the impact of retirement on self-assessed health under different intervals, from 10 to 60 months. The solid line represents the estimated coefficients, and the dash lines indicate the 95\% confidence intervals. The dot line represents standard errors of the coefficients.  
\end{minipage}%

\end{figure}

\subsection{The Test of Justification Bias Based on Assumption 2}
\ \ \ \ The preceding findings rely on the assumption that actual health does not undergo abrupt changes within a short time frame surrounding retirement.\footnote{Formally, the assumption is that conditional expectation function of the actual health is continuous at the MRA, as defined in the conceptual framework.} Any observed changes are therefore attributed to the reporting bias rather than the actual deterioration in health. While this assumption is plausible, it can be further relaxed by incorporating information on objective health into the analysis. 

Our second identification assumption and the corresponding Proposition 2 focus on the immediate effects of retirement on subjective relative to objective health, rather than solely on subjective health, as indicative of justification bias. Heuristically, our first identification assumption is similar to examine the variation only in the ``treatment group'', whereas the second assumption further uses the variation in objective health as the ``control group''. It thus exploits a design akin to the ``difference-in-differences'', i.e. the change in subjective health before and after the retirement versus the pre-post change in objective health.

To assess the impact of retirement on subjective relative to objective health, we create a new dependent variable, which characterizes the difference between the standardized subjective health and the standardized objective health. The resulting estimate is an indication of the extent to which subjective health exhibits extra change, in standard deviations, relative to objective health following retirement. We also provide an alternative specification, which is based on the regression similar to the one under Assumption 1, but additionally adding objective health conditions as control variables.

\begin{table}[H]
  \centering
  	\footnotesize
  \caption{The Test of Justification Bias Based on Assumption 2}
  \begin{minipage}{0.99\textwidth}
  \tabcolsep=0.18cm
\begin{tabular}{p{0.24\textwidth}>{\centering}p{0.1\textwidth}>{\centering}p{0.1\textwidth}>{\centering}p{0.1\textwidth}>{\centering}p{0.1\textwidth}>{\centering}p{0.1\textwidth}>{\centering\arraybackslash}p{0.1\textwidth}}
    \hline
    \hline
          & 10 Months & 20 Months & 30 Months & 40 Months & 50 Months & 60 Months \\
    \hline
  \multicolumn{7}{c}{ \textbf{Panel A: Difference in Subjective and Objective Health} }   \\
      \multicolumn{7}{l}{\textbf{1st Stage}} \\   
  \qquad  $\mathbf{1}$[Age$>$MRA] & 0.0455*** & 0.0345*** & 0.0385*** & 0.0435*** & 0.0390*** & 0.0417***  \\
          & (0.0106) & (0.0125) & (0.0138) & (0.0145) & (0.0147) & (0.0147)  \\
   \multicolumn{3}{l}{\textbf{2nd Stage}}    &  &  &  &      \\
    \qquad Retirement & 0.328 & -0.0398 & 0.239 & 1.033 & 1.767* & 1.570* \\
          & (0.721) & (0.737) & (0.634) & (0.664) & (0.932) & (0.819) \\
    \hline
  \multicolumn{7}{c}{  \textbf{Panel B: Objective Health as Controls} }\\
        \multicolumn{7}{l}{\textbf{1st Stage}} \\   
  \qquad  $\mathbf{1}$[Age$>$MRA] & 0.0433*** & 0.0337*** & 0.0382*** & 0.0432*** & 0.0389*** & 0.0415***  \\
          & (0.0106) & (0.0124) & (0.0138) & (0.0144) & (0.0147) & (0.0147)  \\
   \multicolumn{3}{l}{\textbf{2nd Stage}}    &  &  &  &      \\
  \qquad Retirement & 0.146 & 0.065 & 0.0789 & 0.693 & 1.395** & 1.344**  \\
          & (0.424) & (0.468) & (0.404) & (0.437) & (0.683) & (0.626) \\
    \hline
    Individual FE & Yes   & Yes   & Yes   & Yes   & Yes   & Yes \\
    Observations & 12,040 & 23,044 & 33,462 & 42,942 & 51,426 & 59,321 \\
    Number of individuals & 1,044 & 1,299 & 1,554 & 1,735 & 1,984 & 2,177  \\
    \hline
    \end{tabular}
      \label{tab:main_as2}%
\scriptsize  \textit{Notes:} Panel A represents the effect of retirement on the difference between subjective and objective health. The difference between subjective and objective Health is measured by the ``standardize self-assessed  health'' minus the ``standardize objective health''. Panel B represents the effect on subjective health by controlling objective health. Standard errors clustered at individual are in parentheses. ***$p<0.01$, **$p<0.05$, * $p<0.1$. The second order polynomial function of age is controlled for all results.    
     \end{minipage}
    \label{tab:Tests based on The Second Identification Assumption}%
\end{table}%

The results in Table \ref{tab:Tests based on The Second Identification Assumption} based on Assumption 2 show a pattern similar to our previous findings based on Assumption 1. Under wider intervals, retirement appears to have additional impacts on subjective health relative to objective health. However, as we restrict the sample to a narrower time frame, we no longer find any extra effects on subjective health relative to objective health. Again, we do not find evidences supporting the existence of justification bias.

\section{Additional Results}

\subsection{The Indirect Testing Method}
\ \ \ \ A popular approach to test justification bias indirectly is based on the regression of retirement on self-assessed health and compare the OLS and the IV estimate using objective health as instruments. Because both measurement errors and justification bias are arguably orthogonal to objective health conditions, the IV estimate is considered as unbiased. If justification bias dominates, it exaggerates the impact of health on retirement, and the OLS estimate should be larger than the IV counterpart.

\begin{table}[H]
  \centering
  \caption{The Indirect Method to Test Justification Bias}
  \begin{minipage}{0.8\textwidth}
    \footnotesize
    \begin{tabular}{p{0.39\textwidth}>{\centering}p{0.25\textwidth}>{\centering\arraybackslash}p{0.25\textwidth}}
    \hline
    \hline
              &\multicolumn{2}{c}{Retirement}  \\
          & OLS  & IV  \\
    \hline
    Self-Assessed Health & 0.017*** & 0.147*** \\
          & (0.005) & (0.039) \\
          \hline
    Individual FE & Yes   & Yes \\
    Observations & 72,183 & 72,183 \\
    Number of individuals & 2,360 & 2,360 \\
    \hline
    \end{tabular}

    \scriptsize \textit{Notes:} This table presents the OLS and the IV estimate of the effect of self-assessed health on retirement. Objective health conditions are used as instruments. In addition to individual fixed effects, we also control for non-parametric age functions (in years), dummy variables of the survey year, marital status, and the indicator of positive spousal income in the last month. Standard errors clustered at individual are in parentheses, ***$p<0.01$, **$p<0.05$, * $p<0.1$.
    \end{minipage}
  \label{Indirect Method to Test Justification Bias}%
\end{table}

Table \ref{Indirect Method to Test Justification Bias} compares the OLS and IV estimates of the effect of self-assessed health on retirement, where the IV estimate appears to be larger than the OLS estimate. This result suggests that the attenuation bias dominates, whereas the justification bias, if any, is not a serious problem when using self-assessed health as the measure. With conclusions consistent with this indirect approach, our new approach thus validates this popular conventional approach.

It is noteworthy that these two different testing approaches are implemented under the same sample, therefore the conclusions reached by them are directly comparable. Although previous studies typically find no justification bias by the above indirect method, whereas other methods find large justification bias, the conflicting findings may also arise from different social backgrounds and pension schemes across countries.

\subsection{Application to Standard Biennial Data Set}
\ \ \ \ To illustrate the advantage of the monthly data, we also apply our method to the HRS, a prototype of the extensive surveys about older people across many countries. The HRS sample used in the following analysis covers observations from 1998 (Wave 4) to 2018 (Wave 14). Detailed sample construction is illustrated in Appendix B.

The main reason that our method is also applicable to the biennial data set is that individuals' birth month is available. With variation in individual's birth month, we are able to construct the age variable also with monthly variation.\footnote{This is true if the date of interview is homogeneous. If the survey timing is heterogeneous, in general it introduces additional variation to the monthly age measure.} Nevertheless, even if birth months are uniformly distributed, observations that belong to a given age will be divided into 24 groups due to the biennial survey design.\footnote{Consider the example of a time frame encompassing observations of individuals aged from 60 to 62. In the HRS sample, at least 24 individuals within this age range, born in 24 different months, are needed to yield 24 observations. This is due to the biennial survey frequency. In contrast, leveraging the high-frequency sample, the same 24 individuals contribute to 24*24=596 observations because they are surveyed on a monthly basis.} Therefore the statistical power becomes a critical concern by applying our method to these typical data sets.

Figure \ref{fig:SE in HRS and SLP} compares the standard errors of the effect of retirement on self-assessed health using the HRS and the SLP sample, across different time frames. Due to insufficient sample size of the HRS sample, the first stage cannot be estimated when the interval is narrower than eight months around the statutory retirement age. Therefore, results are reported with an interval of eight months and more. The results indicate that when utilizing the HRS, the standard errors remain stable for intervals from 24 - 60 months, though with a gradual rise as the interval shrinks. With intervals narrower than that, the standard errors of coefficients start to surge, and the advantage of using the high-frequency data is evident. Our new method, however, has the potential to be applicable to those standard biennial data sets when more waves of data are accumulated.

\begin{figure}[H]
\centering
\begin{minipage}{0.8\textwidth}
\quad \includegraphics[height=7cm]{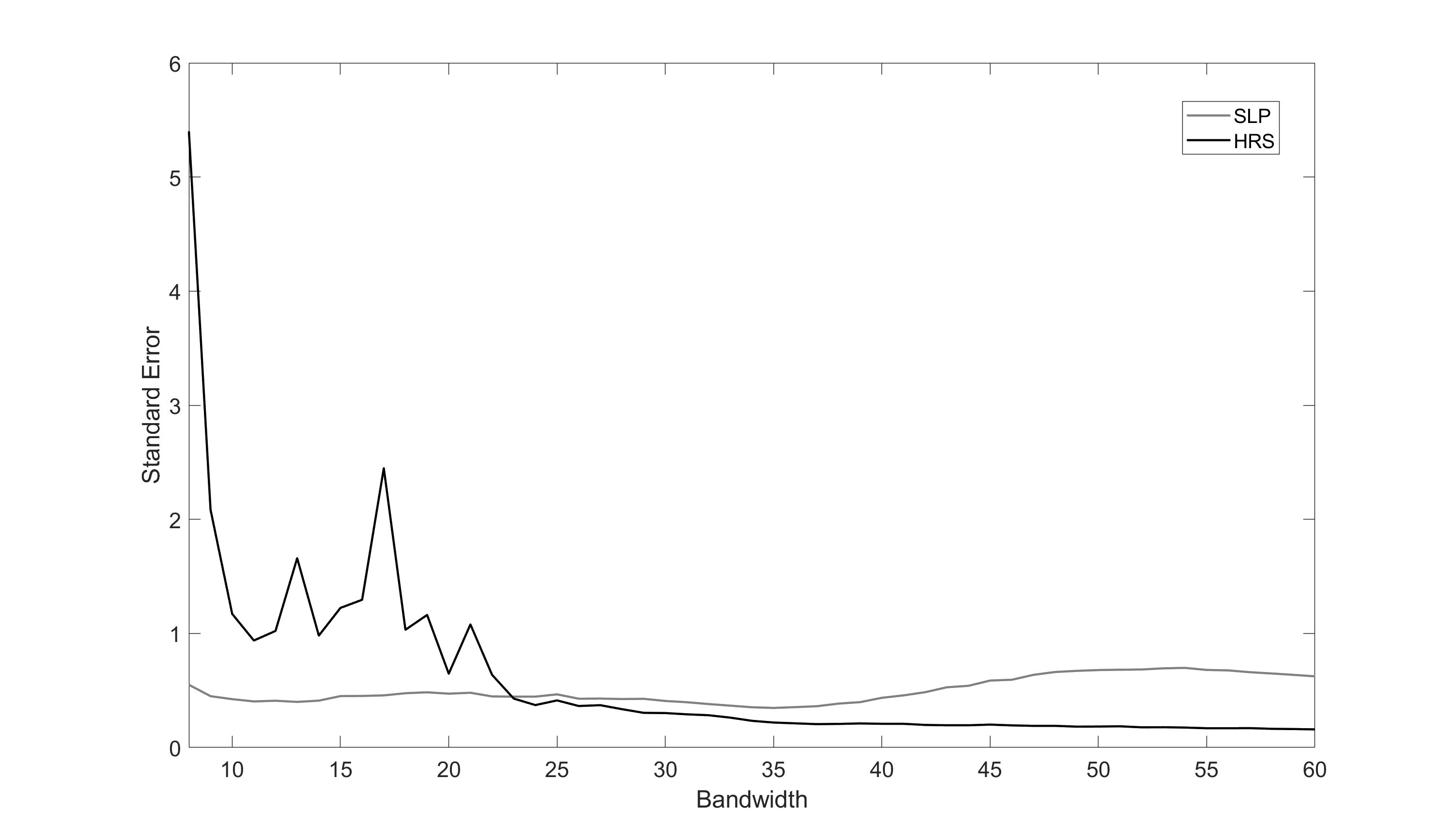}
  \caption{Standard Errors in HRS and SLP}
  \label{fig:SE in HRS and SLP}
  
  \vspace{0.2cm}
  \scriptsize \textit{Notes:} This graph presents standard errors using HRS and SLP data with different time frames, under the same estimation strategy. Early Retirement Age (age 62) is used as the instrument in the HRS result. The solid black line represents the standard error obtained using the HRS, while the solid gray line represents the result using the SLP.
\end{minipage}
\end{figure}

\subsection{Effects of Retirement on Objective Health}
\ \ \ \ Assumption 1 assumes that the actual health does not change suddenly after retirement. Although it is relaxed later by Assumption 2, Assumption 1 is testable with monthly data on objective health conditions, which are close proxy of the actual health.

\begin{table}[H]
  \centering
  \footnotesize
  \caption{The Effects of Retirement on Objective Health under Various Time Frames}
   \begin{minipage}{0.99\textwidth}
  \tabcolsep=0.18cm
\begin{tabular}{p{0.24\textwidth}>{\centering}p{0.1\textwidth}>{\centering}p{0.1\textwidth}>{\centering}p{0.1\textwidth}>{\centering}p{0.1\textwidth}>{\centering}p{0.1\textwidth}>{\centering\arraybackslash}p{0.1\textwidth}}
    \hline
    \hline
          & 10 Months & 20 Months & 30 Months & 40 Months & 50 Months & 60 Months \\
    \hline
    \textbf{2nd Stage} & & & & & & \\
   \qquad Retirement & -0.128 & 0.0749 & -0.115 & -0.244 & -0.267 & -0.163 \\
          & (0.403) & (0.424) & (0.365) & (0.333) & (0.382) & (0.348)\\
    \hline
    Individual Fixed-Effect & Yes   & Yes   & Yes   & Yes   & Yes   & Yes  \\
    Observations & 12,040 & 23,044 & 33,462 & 42,942 & 51,426 & 59,321\\
    Number of individuals & 1,044 & 1,299 & 1,554 & 1,735 & 1,984 & 2,177 \\
    \hline
    \end{tabular}%
  \label{tab:Objective Health}

     \scriptsize \textit{Notes:} This table presents the effect of retirement on objective health under various time frames around the MRA. Objective health is measured by the sum of indicators of various objective health conditions. Standard errors clustered at individual are in parentheses, ***$p<0.01$, **$p<0.05$, * $p<0.1$. The second order polynomial function of age is controlled for all results.
    \end{minipage}

\end{table}%

Table \ref{tab:Objective Health} tests Assumption 1 by examining how retirement affects individuals' objective health under various time frames around the MRA. The results show that retirement has no significant impact on objective health regardless of the width of interval.



\section{Robustness Checks}

\subsection{Adding Additional Control Variables}
\ \ \ \ 
Similar to the regression discontinuity design, the new testing method should hold irrespective of the control variables, because the assumption is that observations right before and after the cutoff should be similar and comparable. Moreover, controlling for individual fixed effects further eliminates a number of omitted variables that are time invariant, such as education and gender. 

\begin{table}[H]
  \centering
  \footnotesize
  \caption{Results with Additional Time-Varying Control Variables}
  \begin{minipage}{0.99\textwidth}
  \tabcolsep=0.19cm
\begin{tabular}{p{0.23\textwidth}>{\centering}p{0.1\textwidth}>{\centering}p{0.1\textwidth}>{\centering}p{0.1\textwidth}>{\centering}p{0.1\textwidth}>{\centering}p{0.1\textwidth}>{\centering\arraybackslash}p{0.1\textwidth}}
    \hline
    \hline
    & 10 Months & 20 Months & 30 Months & 40 Months & 50 Months & 60 Months \\
    \hline
    \multicolumn{7}{l}{\textbf{1st Stage}} \\   
   \qquad $\mathbf{1}$[Age$>$MRA]  & 0.0447*** & 0.0355*** & 0.0396*** & 0.0453*** & 0.0414*** & 0.0437*** \\
          & (0.0108) & (0.0126) & (0.0139) & (0.0146) & (0.0149) & (0.0149) \\
    \multicolumn{3}{l}{\textbf{2nd Stage}}    &  &  &  &      \\
         
   \qquad Retirement & -0.0386 & 0.0421 & 0.00161 & 0.592 & 1.192** & 1.210**  \\
          & (0.420) & (0.454) & (0.396) & (0.406) & (0.594) & (0.567) \\
    \midrule
    \multicolumn{1}{l}{Individual FE} & Yes   & Yes   & Yes   & Yes   & Yes   & Yes  \\
    Observations & 11,787 & 22,539 & 32,677 & 41,930 & 50,185 & 57,852 \\
    Number of Individuals & 999   & 1,255 & 1,487 & 1,683 & 1,925 & 2,128  \\
    \hline
    \end{tabular}%
  \label{tab:self-assessed Health Effect with More Control Variables}%

 \scriptsize \textit{Notes:} This table presents the results controlling for additional time-varying variables, including the dummies of the survey year, marital status, and the dummy of having positive expense on health insurance. Standard errors clustered at individual are in parentheses, ***$p<0.01$, **$p<0.05$, * $p<0.1$. The second order polynomial function of age is controlled for all results.

      \end{minipage}
\end{table}%
Nevertheless, we conduct a robustness check by adding additional time-varying variables as controls. In this new specification, we control for dummies of the survey years, individuals' marital status, as well as a dummy variable indicating any positive expenditure on health insurance. The results remain robust as shown in Table \ref{tab:self-assessed Health Effect with More Control Variables}.

\subsection{Removing Individual Fixed Effects}
\ \ \ \ Similarly, if the new testing approach is valid, including individual fixed effects should not have great impact on our estimates, although it may affect standard errors of the estimates as we are exploiting the within-individual variation to identify the model \citep{lee2010regression}. To check the sensitivity of our results, we replicate our main analysis after removing the individual fixed effects. As shown in Table \ref{tab:self-assessed Health Effect without Fixed Effect}, we still obtain similar results, although the standard errors become slightly larger.

\begin{table}[H]
  \centering
  \footnotesize
  \caption{Results without Individual Fixed Effects}
   \begin{minipage}{0.99\textwidth}
  \tabcolsep=0.19cm
\begin{tabular}{p{0.23\textwidth}>{\centering}p{0.1\textwidth}>{\centering}p{0.1\textwidth}>{\centering}p{0.1\textwidth}>{\centering}p{0.1\textwidth}>{\centering}p{0.1\textwidth}>{\centering\arraybackslash}p{0.1\textwidth}}
    \hline
    \hline
       & 10 Months & 20 Months & 30 Months & 40 Months & 50 Months & 60 Months \\
    \hline
      \multicolumn{7}{l}{\textbf{1st Stage}} \\   
    \qquad $\mathbf{1}$[Age$>$MRA]  & 0.0439*** & 0.0343*** & 0.0394*** & 0.0435*** & 0.0391*** & 0.0417*** \\
          & (0.0106) & (0.0124) & (0.0137) & (0.0144) & (0.0147) & (0.0147) \\
  \multicolumn{7}{l}{\textbf{2nd Stage}} \\
   \qquad Retirement & 0.136 & 0.0691 & 0.0681 & 0.665 & 1.368** & 1.315** \\
          & (0.424) & (0.470) & (0.407) & (0.433) & (0.678) & (0.619)  \\
    \hline
    \multicolumn{1}{l}{Individual FE} & No    & No    & No    & No    & No    & No     \\
    Observations & 12,040 & 23,044 & 33,462 & 42,942 & 51,426 & 59,321 \\
    Number of Individuals & 1,044 & 1,299 & 1,554 & 1,735 & 1,984 & 2,177 \\
    \hline
    \end{tabular}%
  \label{tab:self-assessed Health Effect without Fixed Effect}%

    \scriptsize   \textit{Notes: } This table presents the results after removing the individual fixed-effects. Standard errors clustered at individual are in parentheses, ***$p<0.01$, **$p<0.05$, * $p<0.1$. The second order polynomial function of age is controlled for all results.
      \end{minipage}
    
\end{table}%

\subsection{Different Functional Forms of Age Controls}
\ \ \ \
Following the literature that estimates the effect of retirement on health by FE-IV, our baseline specifications control for a quadratic function of age to absorb the smooth effect of aging on retirement and health. In this subsection, we allow for alternative functional forms to check the sensitivity of our results. 

Health may evolve differently after the retirement if individuals' habits have changed, which may be reflected in a shifting slope in the conditional expectation function of actual health. Therefore, allowing for changes in the slopes of age functions after reaching the MRA in both the retirement equation \eqref{eqIV1} and the health equation \eqref{eqIV2}, whereby $g_R(X_{it})= \kappa_{R,1} X_{it} + \kappa_{R,2} X_{it} \cdot  \mathbf{1}\{X_{it}>m^*_{i} \}$ and $g_H(X_{it})= \kappa_{H,1} X_{it} + \kappa_{H,2} X_{it} \cdot  \mathbf{1}\{X_{it}>m^*_{i} \}$, we obtain Equation \eqref{robust1} and \eqref{robust2}:
\begin{equation}
R_{it}=\alpha+\rho \cdot \mathbf{1}\{X_{it}>m^*_{i} \}+\kappa_{R,1} X_{it} + \kappa_{R,2} X_{it} \cdot  \mathbf{1}\{X_{it}>m^*_{i} \}+\phi_R^{'}W_{it}+\theta_i+e_{it}
 \label{robust1}
\end{equation}
\begin{equation} 
H_{it}=\beta+\gamma \hat{R}_{it}+\kappa_{H,1} X_{it} + \kappa_{H,2} X_{it} \cdot  \mathbf{1}\{X_{it}>m^*_{i} \}+ \phi_H^{'}W_{it}+\sigma_{i}+\epsilon_{it}
 \label{robust2}
\end{equation}

Table \ref{tab:other specifications} present the estimation results, which are similar to the main results in Table \ref{tab:main_as1}. We still observe a detrimental impact of retirement on self-assessed health that diminishes as the interval narrows, whereas the standard errors remain stable. Meanwhile, the effect becomes statistically insignificant when the window is small enough.\footnote{We have also tried more flexible functional forms and the coefficients remain insignificant. However, the estimates become very imprecise, possibly due to multicollinearity. We are thus less confident that those results are solid evidences of an absence of justification bias.}

\begin{table}[H]
  \centering
  \footnotesize
  \caption{Different Functional Forms of Age Controls}
  \begin{minipage}{0.99\textwidth}
  \tabcolsep=0.18cm
\begin{tabular}{p{0.24\textwidth}>{\centering}p{0.1\textwidth}>{\centering}p{0.1\textwidth}>{\centering}p{0.1\textwidth}>{\centering}p{0.1\textwidth}>{\centering}p{0.1\textwidth}>{\centering\arraybackslash}p{0.1\textwidth}}
    \hline
    \hline
          & 10 Months & 20 Months & 30 Months & 40 Months & 50 Months & 60 Months \\
    \midrule
      \multicolumn{7}{l}{\textbf{1st Stage}} \\   
   \qquad  $\mathbf{1}$[Age$>$MRA] & 0.0455*** & 0.0345*** & 0.0385*** & 0.0435*** & 0.0390*** & 0.0417*** \\
          & (0.0106) & (0.0125) & (0.0138) & (0.0145) & (0.0147) & (0.0147) \\
      \multicolumn{7}{l}{\textbf{2nd Stage}} \\ 
   \qquad Retirement & 0.100   & 0.0745 & 0.0713 & 0.666 & 1.364** & 1.321** \\
          & (0.407) & (0.464) & (0.406) & (0.434) & (0.675) & (0.621) \\
    \hline
    \multicolumn{1}{l}{Individual FE} & Yes   & Yes   & Yes   & Yes   & Yes   & Yes  \\
    Observations & 12,040 & 23,044 & 33,462 & 42,942 & 51,426 & 59,321 \\
    Number of Individuals & 1,044 & 1,299 & 1,554 & 1,735 & 1,984 & 2,177 \\
\hline
    \end{tabular}%
  \label{tab:other specifications}%
  
 \scriptsize \textit{Notes:} This table presents the results based on age functions that have different slopes before and after the MRA. Standard errors clustered at individual are in parentheses, ***$p<0.01$, **$p<0.05$, * $p<0.1$. The second order polynomial function of age is controlled for all results.
     \end{minipage}
\end{table}%

\subsection{Using Alternative Dependent Variable}
\ \ \ \ 
The main analysis is based on the raw measure of self-assessed health with a five-point scale. In this subsection, we consider an alternative dependent variable, the indicator of poor health, which takes a value of one if respondents answered ``fair health'' or ``poor health'' and zero otherwise.  As shown in Table \ref{tab:poor subjective health effect}, we still observe that when the interval around the MRA is wide, retirement has a significant detrimental effect. However, as the time frame narrows, the coefficients become small and insignificant. 
\begin{table}[H]
	\centering
	\footnotesize
  \caption{Results Using The Alternative Dependent Variable}
  \begin{minipage}{0.99\textwidth}
  \tabcolsep=0.19cm
\begin{tabular}{p{0.23\textwidth}>{\centering}p{0.1\textwidth}>{\centering}p{0.1\textwidth}>{\centering}p{0.1\textwidth}>{\centering}p{0.1\textwidth}>{\centering}p{0.1\textwidth}>{\centering\arraybackslash}p{0.1\textwidth}}

    \hline
    \hline
     & 10 Months & 20 Months & 30 Months & 40 Months & 50 Months & 60 Months \\
    \midrule
      \multicolumn{7}{l}{\textbf{1st Stage}} \\   
  \qquad  $\mathbf{1}$[Age$>$MRA] & 0.0455*** & 0.0345*** & 0.0385*** & 0.0435*** & 0.0390*** & 0.0417***  \\
          & (0.0106) & (0.0125) & (0.0138) & (0.0145) & (0.0147) & (0.0147)  \\
   \multicolumn{3}{l}{\textbf{2nd Stage}}    &  &  &  &      \\
 \qquad    Retirement  & 0.0493 & 0.186 & 0.116 & 0.291 & 0.571* & 0.562*\\
         & (0.235) & (0.278) & (0.231) & (0.229) & (0.328) & (0.304)  \\
          \hline
    Individual FE & Yes   & Yes   & Yes   & Yes   & Yes   & Yes   \\
    Observations & 12,040 & 23,044 & 33,462 & 42,942 & 51,426 & 59,321 \\
    Number of individuals & 1,044 & 1,299 & 1,554 & 1,735 & 1,984 & 2,177 \\
    \hline
    \end{tabular}
  \label{tab:poor subjective health effect}%

    \scriptsize \textit{Notes}: This table presents the results using the indicator of poor health as the dependent variable. Standard errors clustered at individual are in parentheses, ***$p<0.01$, **$p<0.05$, * $p<0.1$. The second order polynomial function of age is controlled for all results.
    \end{minipage}
\end{table}%


\section{Conclusion}
 \ \ \ \
Population aging urges many governments to reform their retirement policies. To ensure the efficacy of such policy reforms, it is important to understand the role of health in retirement, as poor health may inhibit older workers' responsiveness to these policy changes.  While most researchers agree that self-assessed health is an important measure for an individual's actual health, justification bias is a major concern. This paper proposes a new approach to test this bias with two alternative identification assumptions. Exploiting a unique high-frequency data that provides monthly information on employment, self-assessed health, as well as objective health conditions, we find no evidence of justification bias in the context of Singapore. 

The high-frequency data is critical to ensure sufficient statistical power to apply our new method, especially when the null hypothesis of having no bias cannot be rejected. However, our method is generally applicable to the standard biennial survey data sets, such as HRS, SHARE, CHARLS, ELSA etc, as long as detailed birth information is available. While the high-frequency monthly data is indispensable at this stage, we expect our approach offers another possibility to test justification bias as more data are available in the future. 

\newpage

\bibliographystyle{apalike}
\bibliography{reference}

\begin{thebibliography}{}

\bibitem[Anderson and Burkhauser, 1985]{anderson1985retirement}
Anderson, K.~H. and Burkhauser, R.~V. (1985).
\newblock The retirement-health nexus: a new measure of an old puzzle.
\newblock {\em Journal of human Resources}, pages 315--330.

\bibitem[Bazzoli, 1985]{bazzoli1985early}
Bazzoli, G.~J. (1985).
\newblock The early retirement decision: new empirical evidence on the influence of health.
\newblock {\em Journal of human resources}, pages 214--234.

\bibitem[Black et~al., 2017]{black2017justification}
Black, N., Johnston, D.~W., and Suziedelyte, A. (2017).
\newblock Justification bias in self-reported disability: New evidence from panel data.
\newblock {\em Journal of Health Economics}, 54:124--134.

\bibitem[Blundell et~al., 2023]{blundell2023}
Blundell, R. et~al. (2023).
\newblock The impact of health on labor supply near retirement.
\newblock {\em Journal of Human Resources}, 58(1):282--334.

\bibitem[Bound, 1991]{bound1991}
Bound, J. (1991).
\newblock Self-reported versus objective measures of health in retirement models.
\newblock {\em Journal of Human Resources}, 26(1):106--138.

\bibitem[Bound et~al., 1999]{bound1999dynamic}
Bound, J., Schoenbaum, M., Stinebrickner, T.~R., and Waidmann, T. (1999).
\newblock The dynamic effects of health on the labor force transitions of older workers.
\newblock {\em Labour Economics}, 6(2):179--202.

\bibitem[Bound et~al., 2010]{bound2010health}
Bound, J., Stinebrickner, T., and Waidmann, T. (2010).
\newblock Health, economic resources and the work decisions of older men.
\newblock {\em Journal of econometrics}, 156(1):106--129.

\bibitem[Clarke and Ryan, 2006]{clarke2006self}
Clarke, P.~M. and Ryan, C. (2006).
\newblock Self-reported health: reliability and consequences for health inequality measurement.
\newblock {\em Health economics}, 15(6):645--652.

\bibitem[Crossley and Kennedy, 2002]{crossley2002reliability}
Crossley, T.~F. and Kennedy, S. (2002).
\newblock The reliability of self-assessed health status.
\newblock {\em Journal of health economics}, 21(4):643--658.

\bibitem[Disney et~al., 2006]{disney2006ill}
Disney, R., Emmerson, C., and Wakefield, M. (2006).
\newblock Ill health and retirement in britain: A panel data-based analysis.
\newblock {\em Journal of health economics}, 25(4):621--649.

\bibitem[Dwyer and Mitchell, 1999]{dwyer1999}
Dwyer, D. and Mitchell, O. (1999).
\newblock Health problems as determinants of retirement: Are self-rated measures endogenous?
\newblock {\em Journal of Health Economics}, 18(2):173--193.

\bibitem[Eibich, 2015]{eibich2015}
Eibich, P. (2015).
\newblock Understanding the effect of retirement on health: Mechanisms and heterogeneity.
\newblock {\em Journal of health economics}, 43:1--12.

\bibitem[French and Jones, 2011]{french2011effects}
French, E. and Jones, J.~B. (2011).
\newblock The effects of health insurance and self-insurance on retirement behavior.
\newblock {\em Econometrica}, 79(3):693--732.

\bibitem[Gannon, 2009]{gannon2009influence}
Gannon, B. (2009).
\newblock The influence of economic incentives on reported disability status.
\newblock {\em Health Economics}, 18(7):743--759.

\bibitem[Garrouste and Perdrix, 2022]{garrouste2022there}
Garrouste, C. and Perdrix, E. (2022).
\newblock Is there a consensus on the health consequences of retirement? a literature review.
\newblock {\em Journal of Economic Surveys}, 36(4):841--879.

\bibitem[Grossman, 1972]{grossman1972concept}
Grossman, M. (1972).
\newblock On the concept of health capital and the demand for health.
\newblock {\em The Journal of Political Economy}, 80(2):223--255.

\bibitem[Gustman and Steinmeier, 2005]{gustman2005social}
Gustman, A.~L. and Steinmeier, T.~L. (2005).
\newblock The social security early entitlement age in a structural model of retirement and wealth.
\newblock {\em Journal of public Economics}, 89(2-3):441--463.

\bibitem[Kapteyn et~al., 2007]{kapteyn2007vignettes}
Kapteyn, A., Smith, J.~P., and Soest, A.~v. (2007).
\newblock Vignettes and self-reports of work disability in the united states and the netherlands.
\newblock {\em American Economic Review}, 97(1):461--473.

\bibitem[Kerkhofs and Lindeboom, 1995]{kerkhofs1995subjective}
Kerkhofs, M. and Lindeboom, M. (1995).
\newblock Subjective health measures and state dependent reporting errors.
\newblock {\em Health economics}, 4(3):221--235.

\bibitem[Lee and Lemieux, 2010]{lee2010regression}
Lee, D.~S. and Lemieux, T. (2010).
\newblock Regression discontinuity designs in economics.
\newblock {\em Journal of economic literature}, 48(2):281--355.

\bibitem[Lindeboom and Kerkhofs, 2009]{lindeboom2009health}
Lindeboom, M. and Kerkhofs, M. (2009).
\newblock Health and work of the elderly: subjective health measures, reporting errors and endogeneity in the relationship between health and work.
\newblock {\em Journal of Applied Econometrics}, 24(6):1024--1046.

\bibitem[Lindeboom and Van~Doorslaer, 2004]{lindeboom2004cut}
Lindeboom, M. and Van~Doorslaer, E. (2004).
\newblock Cut-point shift and index shift in self-reported health.
\newblock {\em Journal of health economics}, 23(6):1083--1099.

\bibitem[Lumsdaine and Exterkate, 2013]{lumsdaine2013survey}
Lumsdaine, R.~L. and Exterkate, A. (2013).
\newblock How survey design affects self-assessed health responses in the survey of health, ageing, and retirement in europe (share).
\newblock {\em European Economic Review}, 63:299--307.

\bibitem[O’Donnell et~al., 2015]{o2015health}
O’Donnell, O., Van~Doorslaer, E., and Van~Ourti, T. (2015).
\newblock Health and inequality.
\newblock In {\em Handbook of income distribution}, volume~2, pages 1419--1533. Elsevier.

\bibitem[Rose, 2020]{Rose2020}
Rose, L. (2020).
\newblock Retirement and health: Evidence from england.
\newblock {\em Journal of Health Economics}, 73:102352.

\bibitem[Stern, 1989]{stern1989measuring}
Stern, S. (1989).
\newblock Measuring the effect of disability on labor force participation.
\newblock {\em Journal of human Resources}, pages 361--395.

\bibitem[Vaithianathan et~al., 2021]{vaithianathan2021}
Vaithianathan, R., Hool, B., Hurd, M.~D., and Rohwedder, S. (2021).
\newblock High-frequency internet survey of a probability sample of older singaporeans: the singapore life panel{\textregistered}.
\newblock {\em The Singapore Economic Review}, 66(06):1759--1778.

\bibitem[Van~der Klaauw and Wolpin, 2008]{van2008social}
Van~der Klaauw, W. and Wolpin, K.~I. (2008).
\newblock Social security and the retirement and savings behavior of low-income households.
\newblock {\em Journal of econometrics}, 145(1-2):21--42.

\end{thebibliography}

\newpage

\appendix


\section*{Appendix A: Proof of Proposition 3}
\ \ \ \ To see Proposition 3 holds, we focus on the identification strategy under Assumption 1 and consider a small time frame around the Minimum Retirement Age(MRA), one of the statutory retirement ages in Singapore.  For observations within this interval, by the retirement responsiveness to extra incentives generated by the statutory age, there are three types of individuals: never-takers who did not retire; always-takers who already retired; and compliers who retire at the statutory age, $  m^{*}\equiv \text{MRA}$.\footnote{This classification is asymptotically true as the interval shortens and converges to the MRA.} Therefore the conditional expectation function of self-assessed health can be defined respectively for these three types of individuals:
\begin{itemize}
    \item Never-takers: $ E[H^s| X, R, Type= N] = E[H^s| X, R=0] = f_0 (X)   $
    \item Always-takers: $E[H^s| X, R, Type= A] = E[H^s| X, R=1] = \tilde{f_1} (X)   $
    \item Compliers:  $ E[H^s| X, R, Type= C] =  E[H^s| X, R] =  (1-R)\cdot f_0(X) +  R\cdot \tilde{f_1}(X) $
\end{itemize}
The observed health is an average over the never-takers, always-takers, and compliers, with corresponding probability mass $P^N, P^A, \text{ and } p^C$. Then we can prove that the justification bias is still identified by the abrupt change of self-assessed health around the MRA.

\begin{proof}
\begin{align*}
& \lim_{\delta \rightarrow 0 } E[H^s | X=m^*+\delta, R]  -  \lim_{\delta \rightarrow 0 } E[H^s | X=m^*-\delta, R]    \\
= &   \lim_{\delta \rightarrow 0 } \big \{   E[H^s|  X=m^*+\delta, R, T=N] \cdot P^{N}  + E[H^s|  X=m^*+\delta, R,T= A] \cdot P^{A}\\
 & \qquad +  E[H^s|  X=m^*+\delta, R, T= C] \cdot P^{C} \big \} \\
 &  - \lim_{\delta \rightarrow 0 } \big \{   E[H^s|  X=m^*-\delta, R,T=  N] \cdot P^{N} + E[H^s|  X=m^*-\delta, R, T= A] \cdot P^{A} \\ 
 & \qquad  +  E[H^s|  X=m^*-\delta, R, T= C] \cdot P^{C} \big \} \\
 = &  \lim_{\delta \rightarrow 0 } \big \{ f_0(m^* +\delta) - f_0(m^* -\delta) \big \} \cdot P^N +  \lim_{\delta \rightarrow 0 } \big \{ \tilde{f_1} (m^* +\delta)  - \tilde{f_1} (m^* -\delta) \big \}   \cdot  P^A \\
 & +  \lim_{\delta \rightarrow 0 }  \big \{   \tilde{f_1} (m^* +\delta)  -   f_0 (m^*-\delta)   \} \cdot  P^C  \big \} \\
 =&   \quad   \lambda \cdot P^C  &&\qedhere
\end{align*}
\end{proof}
That is, as long as $P^C>0$, the justification bias is identified by:
\begin{align*}
 \lambda= \frac  { \lim_{\delta \rightarrow 0 } E[H^s | X=m^*+\delta, R]  -  \lim_{\delta \rightarrow 0 } E[H^s | X=m^*-\delta, R] } { P^C}
\end{align*}

\section*{Appendix B: The HRS Sample }
\ \ \ \ For our HRS sample, the choice of starting from Wave 4 is due to the inconsistent entry times of the HRS cohort and AHEAD cohort in the first three survey waves. Additionally, to maintain consistency in sample selection with the baseline of this study and to exclude the impact of the COVID-19 pandemic, samples conducted in the year 2020 (Wave 15) were excluded.

In the HRS database, the question regarding the respondents' employment status is measured using ``current job status.'' Job status options include ``working now'', ``temporarily laid off'', ``unemployed and looking for work'', ``disabled'', ``retired'', ``homemaker'', and ``other''. Respondents are allowed to choose all options that apply. The measure of self-assessed health in HRS is the same as that in SLP. To maintain consistency with our baseline, when selecting samples for the HRS, we first exclude those samples that had selected multiple options for ``current job status'' and drop samples that answered "homemaker" and "other." Additionally, we follow the same sample processing as the baseline, excluding samples where the employment status remained unchanged. Finally, we just keep samples of age from 55 to 75. After the sample restriction, we ultimately obtained 52,679 observations. 


\begin{table}[H]
	\centering
	\footnotesize
  \caption{Effects of Retirement on Self-Assessed Health by HRS Sample}
  \begin{minipage}{0.99\textwidth}
  \tabcolsep=0.19cm
\begin{tabular}{p{0.23\textwidth}>{\centering}p{0.1\textwidth}>{\centering}p{0.1\textwidth}>{\centering}p{0.1\textwidth}>{\centering}p{0.1\textwidth}>{\centering}p{0.1\textwidth}>{\centering\arraybackslash}p{0.1\textwidth}}

    \hline
    \hline
     & 10 Months & 20 Months & 30 Months & 40 Months & 50 Months & 60 Months \\
    \hline
      \multicolumn{7}{l}{\textbf{1st Stage}} \\   
  \qquad  $\mathbf{1}$[Age$>$62] & 0.436 & 0.0978** & 0.0684*** & 0.0811*** & 0.0828*** & 0.0858*** \\
   & (0.404) & (0.0402) & (0.0124) & (0.0102) & (0.00948) & (0.00871) \\
   \multicolumn{3}{l}{\textbf{2nd Stage}}    &  &  &  &      \\
 \qquad    Retirement & 0.164 & 0.453 & -0.165 & 0.103 & 0.0844 & 0.140 \\
          & (1.171) & (0.647) & (0.302) & (0.208) & (0.184) & (0.159) \\
          \hline
    Individual FE & Yes   & Yes   & Yes   & Yes   & Yes   & Yes   \\
    Observations & 5,482 & 10,541 & 15,645 & 20,487 & 25,202 & 29,685 \\
    Number of individualsn & 5,380 & 6,996 & 7,669 & 8,108 & 8,461 & 8,709 \\
    \hline
    \end{tabular}
  \label{tab:HRSresults}%

    \scriptsize \textit{Notes}: Self-assessed health has been standardized. Standard errors clustered at individual are in parentheses, ***$p<0.01$, **$p<0.05$, * $p<0.1$. The second order polynomial function of age is controlled for all results. This estimation is made by HRS sample.
    \end{minipage}
\end{table}%

\end{document}